\definecolor{darkblue}{rgb}{0,0,0.5}
\apptocmd{\sloppy}{\hbadness 9999\relax}{}{}
\newtheorem{theorem}{Theorem}
\newtheorem{lemma}[theorem]{Lemma}
\newenvironment{proof}[1][Proof]{\noindent\textbf{#1.} }{\ \rule{0.5em}{0.5em}}
\newcommand{\calB}{{\cal B}}
\newcommand{\calD}{{\cal D}}
\newcommand{\calE}{{\cal E}}
\newcommand{\calL}{{\cal L}}
\newcommand{\calM}{{\cal M}} 
\newcommand{\calN}{{\cal N}} 
\newcommand{\calP}{{\cal P}} 
\newcommand{\calS}{{\cal S}}
\newcommand{\calH}{{\cal H}}
\newcommand{\1}{^{(1)}}
\newcommand{\bI}{\boldsymbol I}
\DeclareMathOperator{\E}{\mathbb{E}}
\def\be{\begin{equation}}
\def\ee{\end{equation}}
\def\ba{\begin{eqnarray}}
\def\ea{\end{eqnarray}}
\begin{document}
\title{Generative Quantum Machine Learning via Denoising Diffusion Probabilistic Models}

\author{Bingzhi Zhang}
\affiliation{ Department of Physics and Astronomy, University of Southern California, Los
Angeles, California 90089, USA
}
\affiliation{
Ming Hsieh Department of Electrical and Computer Engineering, University of Southern California, Los
Angeles, California 90089, USA
}

\author{Peng Xu}
\affiliation{Department of Statistics, University of Illinois at Urbana-Champaign, Champaign, Illinois 61820, USA}

\author{Xiaohui Chen}
\email{xiaohuic@usc.edu}
\affiliation{Department of Mathematics, University of
Southern California, Los
Angeles, California 90089, USA}

\author{Quntao Zhuang}
\email{qzhuang@usc.edu}
\affiliation{
Ming Hsieh Department of Electrical and Computer Engineering, University of Southern California, Los
Angeles, California 90089, USA
}
\affiliation{ Department of Physics and Astronomy, University of Southern California, Los
Angeles, California 90089, USA
}

\begin{abstract}
Deep generative models are key-enabling technology to computer vision, text generation, and large language models. Denoising diffusion probabilistic models (DDPMs) have recently gained much attention due to their ability to generate diverse and high-quality samples in many computer vision tasks, as well as to incorporate flexible model architectures and a relatively simple training scheme.  
Quantum generative models, empowered by entanglement and superposition, have brought new insight to learning classical and quantum data. Inspired by the classical counterpart, we propose the \emph{quantum denoising diffusion probabilistic model} (QuDDPM) to enable efficiently trainable generative learning of quantum data. 
QuDDPM adopts sufficient layers of circuits to guarantee expressivity, while it introduces multiple intermediate training tasks as interpolation between the target distribution and noise to avoid barren plateau and guarantee efficient training.
We provide bounds on the learning error and demonstrate QuDDPM's capability in learning correlated quantum noise model, quantum many-body phases, and topological structure of quantum data. The results provide a paradigm for versatile and efficient quantum generative learning.
\end{abstract}
\maketitle

Variational parametrized quantum circuits (PQCs)~\cite{cerezo2021variational,killoran2019continuous,abbas2021power,cong2019quantum} provide a near-term platform for quantum machine learning~\cite{biamonte2017quantum,rebentrost2014quantum,lloyd2014quantum}. In terms of generative models~\cite{gao2022enhancing,khoshaman2018quantum,amin2018quantum,gao2018quantum}, quantum generative adversarial networks (QuGANs) have been recently proposed~\cite{lloyd2018quantum,dallaire2018quantum,hu2019quantum,huang2021experimental,zhu2022generative}, in analogy to classical generative adversarial networks (GANs)~\cite{NIPS2014_5ca3e9b1}.
Despite the success, classical GAN models are known for training issues such as mode collapse. In classical deep learning,
denoising diffusion probabilistic models (DDPMs) and their close relatives~\cite{sohl2015deep,ho2020denoising,song2021scorebased,schneuing2022structurebased,ChenGeorgiouPavon2016} have recently gained much attention due to relatively simple training schemes and their ability to generate diverse and high-quality samples in many computer vision tasks~\cite{dhariwal2021diffusion,muller2022diffusion,NEURIPS2021_7d6044e9,NEURIPS2019_3001ef25} over the best GANs, and to incorporate flexible model architectures. 

\begin{figure}
    \centering
    \includegraphics[width=0.45\textwidth]{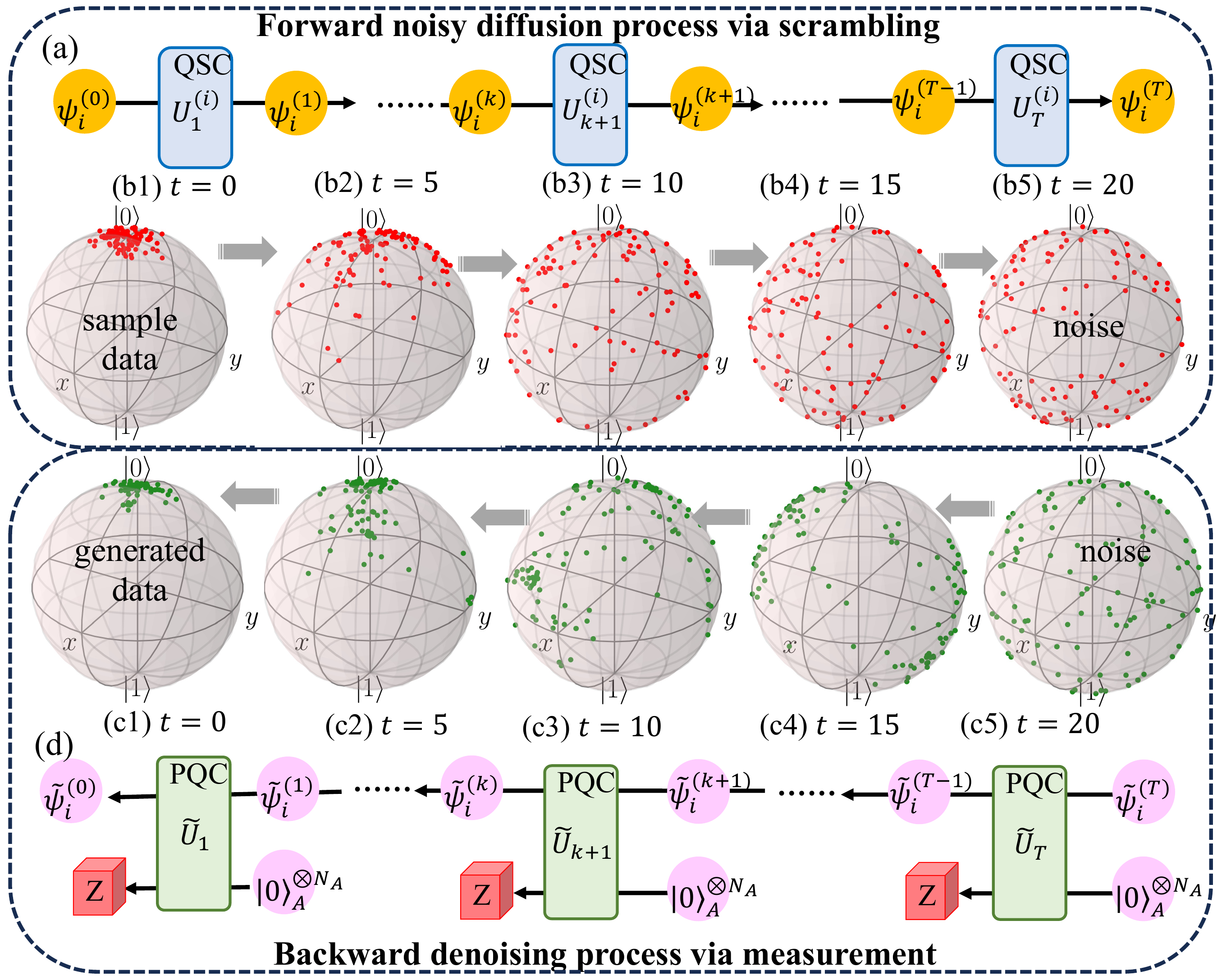}
    \caption{Schematic of QuDDPM. The forward noisy process is implemented by a quantum scrambling circuit (QSC) in (a), while in the backward denoising process is achieved via measurement enabled by ancilla and PQC in (d). 
    Subplots (b1)-(b5) and (c1)-(c5) present the Bloch sphere dynamics in generation of states clustering around $\ket{0}$, where convergence can be seen despite sample fluctuations, as shown in the Appendix~\ref{app:cluster}.
    }
    \label{fig:schematic}
\end{figure}

In this work, we propose the \emph{quantum denoising diffusion probabilistic model} (QuDDPM) as an efficiently trainable scheme to generative quantum learning, through a coordination between a forward noisy diffusion process via quantum scrambling~\cite{nahum2017quantum,nahum2018operator} and a backward denoising process via quantum measurement.
We provide bounds on the learning error and then demonstrate QuDDPM's capability in examples relevant to characterizing quantum device noises, learning quantum many-body phases and capturing topological structure of quantum data.
For an $n$-qubit problem, QuDDPM adopts linear-in-$n$ layers of circuits to guarantee expressivity, while it introduces $T\sim n/\log(n)$ intermediate training tasks to guarantee efficient training.

\section{General formulation of QuDDPM}
We consider the task of the generating new elements from an unknown distribution $\calE_0$ of quantum states, provided only a number of samples $\calS_0=\{\ket{\psi_k}\}\sim \calE_0$ from the distribution.  
The task under consideration---generating individual states from the distribution (e.g., a single Haar random state or $K$-design state)---is {\it not} equivalent to generating the average state of a distribution (e.g., a fully mixed state for Haar ensemble) considered in previous works of QuGAN~\cite{lloyd2018quantum}. To complete the task, QuDDPM learns a map from a noisy unstructured distribution of states to the structured target distribution $\calE_0$. It does so via a divide-and-conquer strategy of creating smooth interpolations between the target distribution and full noise, so that the training is divided to subtasks on a low-depth circuit to avoid barren plateau~\cite{mcclean2018barren, cerezo2021cost,wang2021noise,carlos2021entanglement}.

As shown in Fig.~\ref{fig:schematic}, QuDDPM includes two quantum circuits, one to enable the forward diffusion of sample data toward noise via scrambling and one to realize the backward denoising from noise toward generated data via measurement. For each data $\ket{\psi_i^{(0)}}$, the forward scrambling circuit [Fig.~\ref{fig:schematic} (a)] samples a series of $T$ random unitary gates $U_1^{(i)},\dots, U^{(i)}_T$ independently, such that the ensemble $\calS_k=\{\ket{\psi_{i}^{(k)}}=\prod_{\ell=1}^k U_\ell^{(i)} \ket{\psi_i^{(0)}}\}_i$ evolves from the sample data toward a random ensemble of pure states from $k=0$ to $k=T$. A Bloch sphere visualization of such a forward scrambling dynamics is depicted in Figs.~\ref{fig:schematic}(b1)-(b5) for a toy problem of learning single-qubit states $\calS_0$ clustered around a single pure state, e.g. $\ket{0}$, (b1), where the noise $\calS_T$ is uniform on the Bloch sphere (b5).

With the interpolation from data $\calS_0$ and noise $\calS_T$ in hand, the backward process can start from randomly sampled noise $\tilde{S}_T$ [Fig.~\ref{fig:schematic}(c5)] and reduce the noise gradually via measurement step by step, toward the final generated data $\tilde{\calS}_0$ [Fig.~\ref{fig:schematic}(c1)] that mimic the sample data [Fig.~\ref{fig:schematic}(b1)]. Measurements are necessary, as the denoising map is contractive and maintains the purity of all generated data in $\tilde{\calS}_0$.
As shown in Fig.~\ref{fig:schematic}(d), each denoising step adopts a unitary $\tilde{U}_k$ on the system plus $n_A$ number of ancilla qubits in $\ket{0}$ and performs a projective measurement in computational bases on the ancilla after the unitary $\tilde{U}_k$. Starting from the state $\ket{\tilde{\psi}_i^{(T)}}$, which is randomly sampled from noise ensemble, each unitary plus measurement step evolves the random state toward the generated data $\ket{\tilde{\psi}_i^{(0)}}$. Note that here all unitaries $\tilde{U}_k$ are fixed after training. In practice, the generation of noisy $\ket{\tilde{\psi}_i^{(T)}}$ can be directly completed by running the $T$ layers of the forward scrambling circuit on an arbitrary initial state. 
Via training, the denoising process learns information about the target from the ensembles in the forward scrambling, stores information in the circuit parameters, and then encodes onto the generated data.

\begin{figure}
    \centering
    \includegraphics[width=0.475\textwidth]{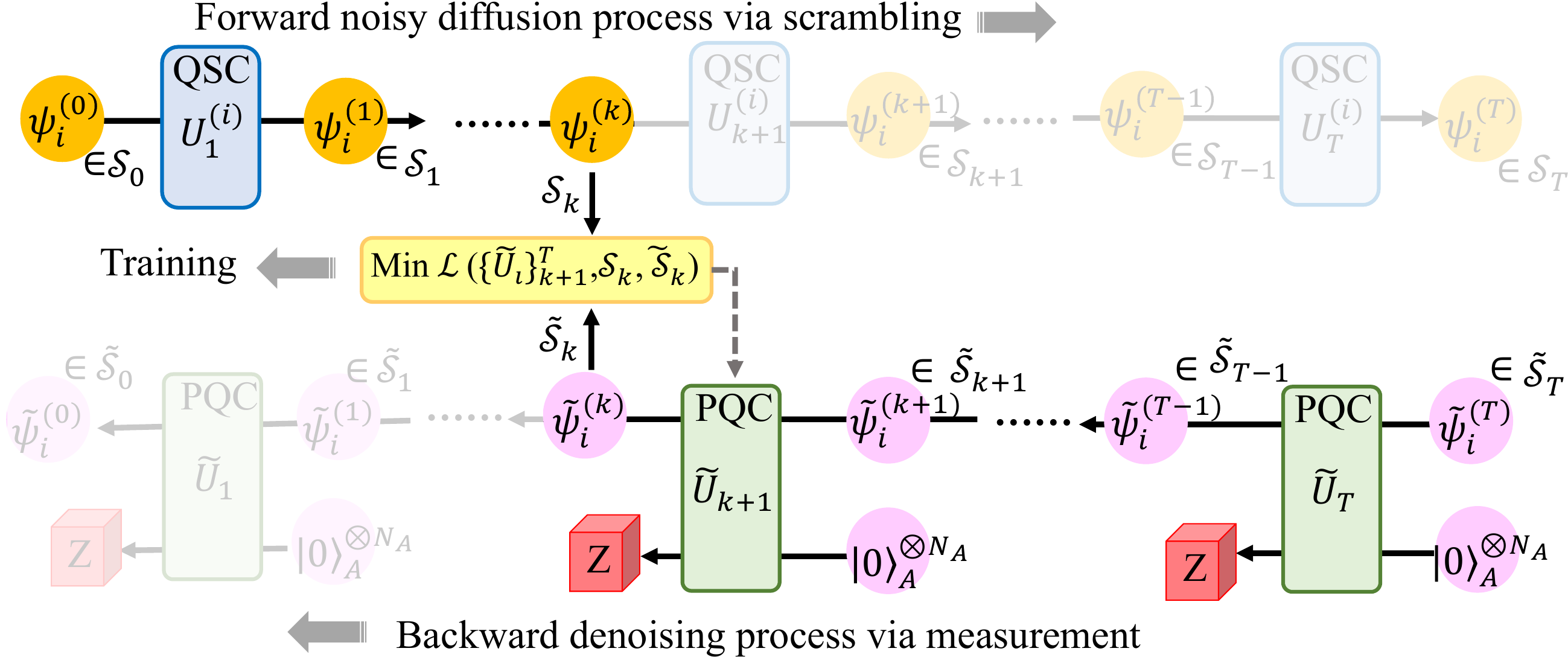}
    \caption{The training of QuDDPM at each step $t=k$. Pairwise distance between states in generated ensemble $\tilde{\psi}_i^{(k)}\in \tilde{\calS}_{k}$ and true diffusion ensemble $\psi_j^{(k)}\in\calS_k$ is measured and utilized in the evaluation of the loss function $\calL$. }
    \label{fig:schematic_train}
\end{figure}

\section{Training strategy}
In classical DDPM, the Gaussian nature of the diffusion allows efficient training via maximizing an evidence lower bound for the log-likelihood function, which can be evaluated {\it analytically}~\cite{sohl2015deep,ho2020denoising} (see Appendix~\ref{app:review}).
However, in QuDDPM, we do not expect such analytical simplification to exist at all---classical simulation of quantum device is inherently inefficient. Instead, the training of the QuDDPM relies on the capability of quantum measurements to extract information about the ensemble of quantum states for the efficient evaluation of a loss function.

The training of a $T$-step QuDDPM consists of $T$ training cycles, starting from the first denoising step $\tilde{U}_T$ toward the last $\tilde{U}_1$. As shown in Fig.~\ref{fig:schematic_train}, at the training cycle $(T+1-k)$, the forward noisy diffusion process is implemented from $U_1^{(i)}$ to $U_k^{(i)}$ to generate the noisy ensemble $\calS_k=\{\ket{\psi_i^{(k)}}\}_i$, While the backward denoising process performs the denoising steps $\tilde{U}_T$ to $\tilde{U}_{k+1}$ to generate the denoising ensemble $\tilde{\calS}_k=\{\ket{\tilde{\psi}_i^{(k)}}\}_i$. Within the training cycle, the parameters of the denoising PQC $\tilde{U}_{k+1}$ are updated such that the generated denoising ensemble $\tilde{\calS}_k$ converges to the noisy ensemble $\calS_k$. Therefore, QuDDPM divides the original training problem into $T$ smaller and easier ones. Indeed, with a local loss function, for $n$ qubits we can divide the $\Omega(n)$ layers (required by expressivity) of gates into $T\in \Omega(n/\log n)$ diffusion steps, such that each $\tilde{U}_{k+1}$ has order $\log(n)$ layers of gates to avoid barren plateau~\cite{cerezo2021cost}.

\section{Loss function}
To enable training, a loss function quantifies the distance between the two ensembles of quantum states. In this work, we focus on the maximum mean discrepancy (MMD)~\cite{JMLR:v13:gretton12a} (see Appendix~\ref{app:RKHS_MMD}) and the Wasserstein distance~\cite{villani2003topics,oreshkov2009distinguishability} (see Appendix~\ref{app:wasserstein_dist}) based on the state overlaps $\lvert\braket{\psi_i^{(k)}|\tilde{\psi}_i^{(k)}}\rvert^2$ estimated via a swap test (see Appendix~\ref{app:swap}).

Given two independent distributions of pure states $\calE_1$ and $\calE_2$ on the state vector space $V$, the \emph{statewise fidelity} between $\ket{\psi}$ and $\ket{\phi}$ is defined as $F(\ket{\phi}, \ket{\psi}) = \lvert\braket{\phi|\psi}\rvert^2$, and we can further define the \emph{mean fidelity} as
\begin{equation}
    \label{eqn:fidelity}
    \overline{F}(\calE_1, \calE_2)=\E_{\ket{\phi}\sim\calE_1,\ket{\psi}\sim\calE_2}\left[\lvert\braket{\phi|\psi}\rvert^2\right],
\end{equation}
where the random states $\ket{\phi} \sim \calE_1$ and $\ket{\psi} \sim \calE_2$ are drawn independently. Since the fidelity $F$ is a symmetric and positive definite quadratic kernel, according to the theory of reproducing kernel Hilbert space~\cite{scholkopf2018}, the MMD distance can be written as (see Appendix~\ref{app:RKHS_MMD})
\begin{align}
    \calD_\text{MMD}(\calE_1, \calE_2) = \overline{F}(\calE_1, \calE_1) + \overline{F}(\calE_2, \calE_2)-2\overline{F}(\calE_2, \calE_1),
    \label{eq:nat_distance}
\end{align}
which allows the estimation of MMD through sampled state ensembles $\calS_1$ and $\calS_2$. The expressivity of general MMD as a statistical distance measure depends on the kernel. On one hand, identifiability requires that the distance be zero if and only if $\calE_1 = \calE_2$. On the other hand, one also needs to ensure the quality of statistical estimation of the distance with a finite sample size of state ensembles. Hence, whether fidelity~\eqref{eqn:fidelity} is a proper kernel choice is problem-dependent. In Appendix~\ref{app:distance_degenerate}, we show an example where $\calD_\text{MMD}$ in Eq.~\eqref{eq:nat_distance} fails to distinguish two simple distributions. 
To resolve such an issue, we may alternatively consider the Wasserstein distance, a geometrically meaningful distance for comparing complex data distributions based on the theory of optimal transport~\cite{villani2003topics,oreshkov2009distinguishability} (see the Appendix~\ref{app:wasserstein_dist} for details). 

\begin{figure}
    \centering
    \includegraphics[width=0.45\textwidth]{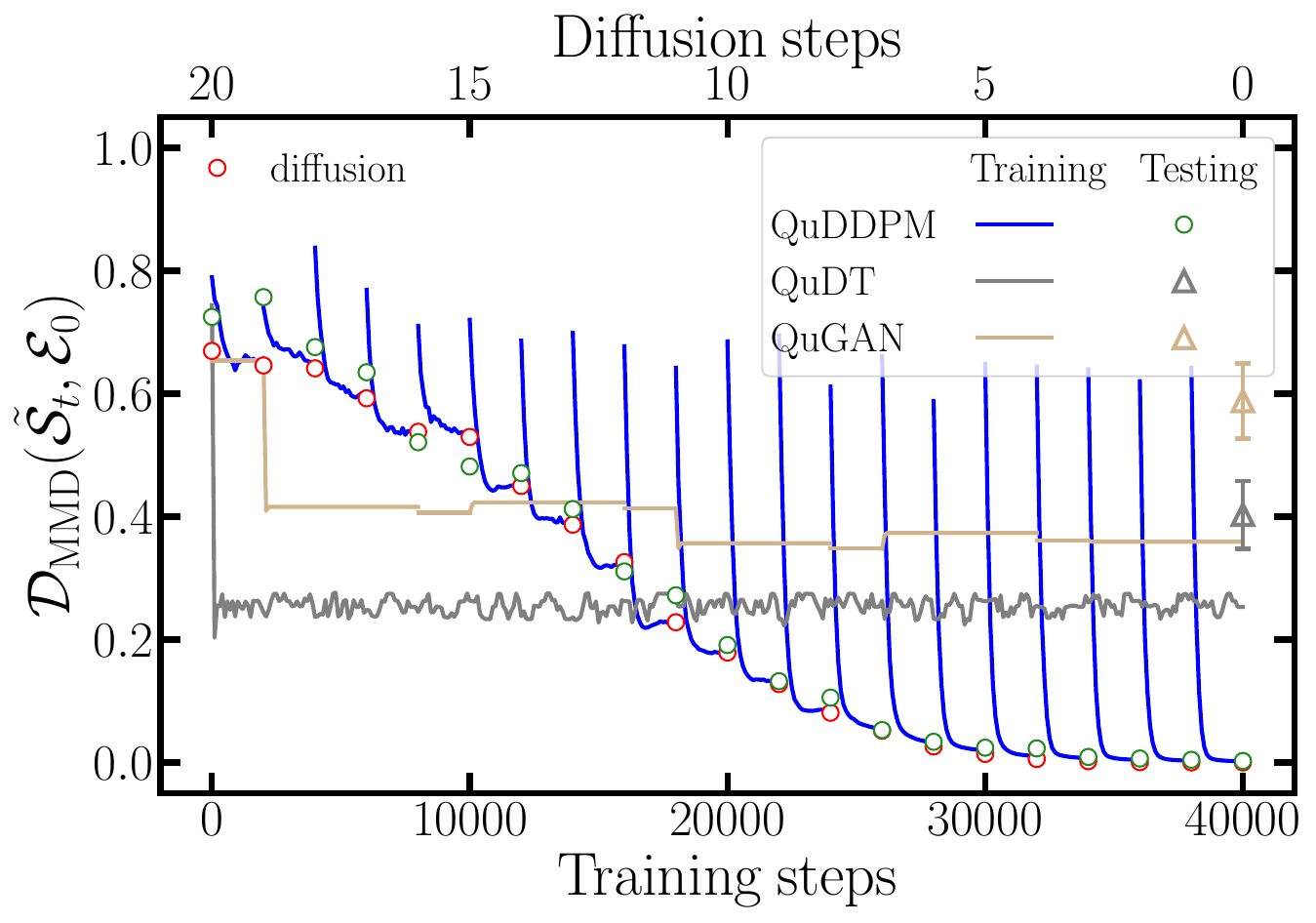}
    \caption{The decay of MMD distance $\calD$ between generated ensemble $\tilde{\calS}_t$ using different models and target ensemble of states $\calE_0$ clustered around $\ket{0,0}$ versus training steps. The converged value is $\calD \simeq 0.002$ for QuDDPM, showing an advantage of $2$ orders of magnitude over QuDT and QuGAN.}
    \label{fig:MMD_cost_cluster_two_qubit}
\end{figure}

As shown in Fig.~\ref{fig:schematic_train}, in the training cycle $t=k$, loss is a function of the unitary $\{\tilde{U}_\ell\}_{k+1}^T$ and depends on the noise distribution $\tilde{\calE}_T$ and the scrambled data distribution $\calE_k$,
\be 
\calL(\{\tilde{U}_\ell\}_{k+1}^T,\calE_k, \tilde{\calE}_T)=\calD\left(\calE_k, \tilde{\calE}_k[\{\tilde{U}_\ell\}_{k+1}^T,\tilde{\calE}_T]\right),
\ee
where $\calD$ can be the MMD distance or the Wasserstein distance. The distribution $\tilde{\calE}_k$ is a function of all the reverse denoising steps from $T$ to $k+1$ and the noise distribution $\tilde{\calE}_T$. In practice, we use finite samples to approximate the loss function as $\calL(\{\tilde{U}_\ell\}_{k+1}^T,\calS_k, \tilde{\calS}_T)$.

The toy example in Fig.~\ref{fig:schematic} adopted the MMD distance in the loss function and details of the training can be found in Appendix~\ref{app:cluster} (see Ref.~\cite{Github} for codes and data, and Table~\ref{tab:hyperparams} for details of parameters). Here we present the training history of a more challenging $2$-qubit example of preparing states clustered around $\ket{0,0}$, to allow a meaningful comparison with other algorithms. 
In each of the 20 steps of training cycles, the loss function is minimized till convergence. 
To quantify the convergence, we also evaluate the MMD distance (see Fig.~\ref{fig:MMD_cost_cluster_two_qubit}) $\calD(\tilde{\calS}_t, \calE_0)$ between the true distribution $\calE_0$ and the trained ensemble of states $\tilde{\calS}_t$ throughout the training cycles (blue), showing a convergence toward $\calD=0$. The periodic spikes show the initial increase of the MMD distance at each training cycle, due to introducing a randomly initialized PQC in a new denoising step. For reference, we also plot the evolution of the MMD distance throughout the forward-diffusion (red circles), which starts from zero at diffusion step 0 and grows toward a larger value as the diffusion step increases (from right to left). We see the training results (blue) follow closely to the diffusion results (red) as expected. In addition, the testing results (green) also agree well with the training results (blue) for QuDDPM.

As benchmarks, we consider two major quantum generative models, QuGAN and quantum direct transport (QuDT). QuDT can be regarded as the generalization of the quantum circuit Born machine~\cite{liu2018differentiable,benedetti2019generative,coyle2020born,gili2023quantum} toward quantum data. Previous works of both models only considered a single quantum state or classical distributions~\cite{hu2019quantum, zoufal2019quantum, benedetti2019generative}, and here we generalize them to adapt to the state ensemble generation task by allowing Haar random states as inputs and introducing ancilla to be measured (see Appendix~\ref{app:benchmark}). For a fair comparison, we keep the number of variational parameters of generator circuits in QuDT and QuGAN the same as QuDDPM, listed in Tabble~\ref{tab:hyperparams_benchmark}. As shown in Fig.~\ref{fig:MMD_cost_cluster_two_qubit}, QuDT and QuGAN converge to generate ensembles with a substantial MMD deviation to the true ensemble, demonstrating QuDDPM's advantage due to its unique diffusion and denoising process.

\section{Gate complexity and convergence}
Now we discuss the number of local gates required and convergence analysis for QuDDPM to solve an $n$-qubit generative task. For simplicity, we assume the qubits are one-dimensional with nearest-neighbor interactions, while similar counting can be done for other cases. To guarantee convergence toward noise, the forward scrambling circuits need a linear number of layers in $n$ as predicted by $K$ design~\cite{brandao2016local,harrow2023approximate}, leading to $O(n^2)$ total gates. 
The backward circuit will be similar, with at most $n_A\le 2n$ additional ancillas and $O(n^2)$ gates, leading the overall gate complexity of QuDDPM to be $O(n^2)$.

\begin{figure}[t]
    \centering
    \includegraphics[width=0.45\textwidth]{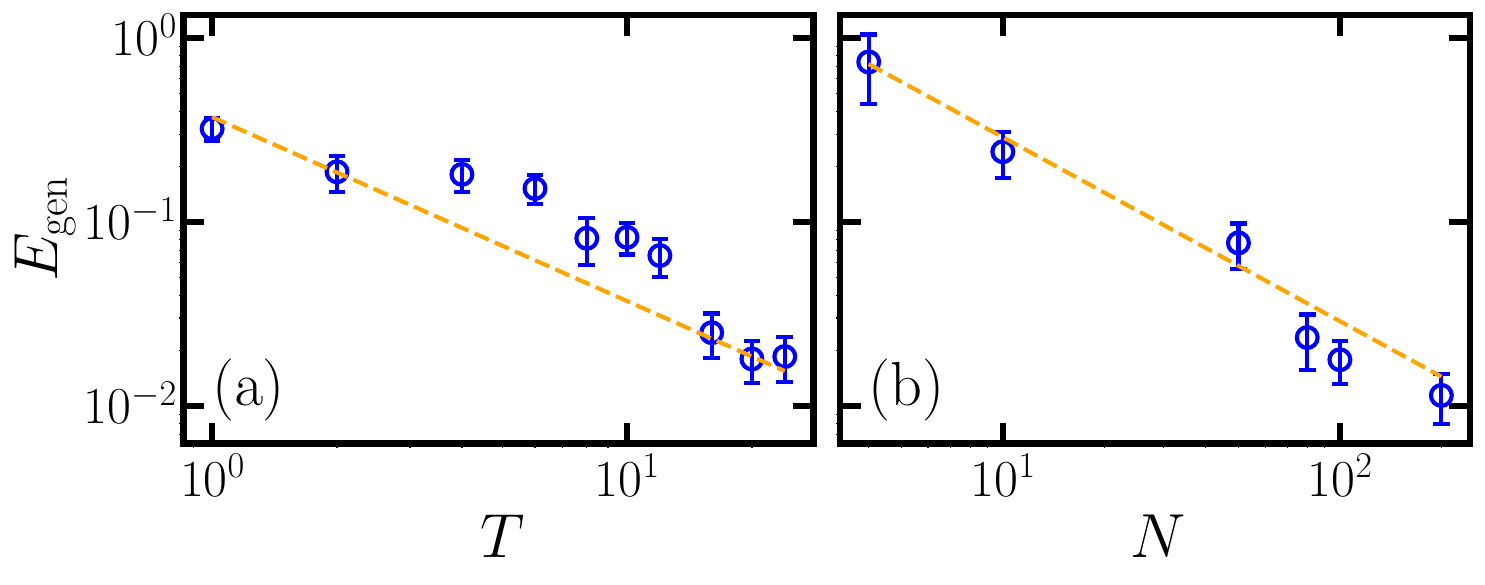}
    \caption{The generalization error of QuDDPM in generating cluster states versus (a) diffusion steps $T$ and (b) training dataset size $N$. Dots are numerical results and orange dashed line is linear fitting results with both exponents equal to $1$ within the numerical precision. }
    \label{fig:gen_error}
\end{figure}

Similar to the classical case~\cite{chen2022sampling}, the total error of QuDDPM involves three parts, 
\begin{align}
     E \simeq E_{\rm diff} + E_{\calM} + E_{\rm gen},
\end{align}
with a deviation $E_{\rm diff}$ of $\calS_T$ to true random states, measurement error $E_{\calM}$ and generalization error $E_{\rm gen}$.
We discuss the scaling of three parts separately in the following. Suppose the diffusion circuits approach an approximate $K$ design; its diffusion error is known as~\cite{brandao2016local}
\begin{align}
    E_{\rm diff} \sim 2^{n K} e^{-T/A(K) C} \sim \mathcal{O}\left(e^{-T}\right)
\end{align}
where $A(K) = \lceil \log_2(4K)\rceil^2 K^5 t^{3.1/\log(2)}$ is a polynomial of $K$ and $C$ is a constant determined by the circuit in a single step. For measurement, the standard error in estimating the fidelity $F_{ij}$ between any two states $\ket{\psi_i}, \ket{\tilde{\psi}_j}$ is ${\rm SE}(F_{ij}) = \sqrt{(1-F_{ij})/m}$, where $m$ is the number of repetitions of measurement. With $N$ data in the two sets $\calS, \tilde{\calS}$, the measurement error of estimating the $\textit{mean fidelity}$ is
\begin{align}
    E_\calM &= \frac{1}{N^2}\sqrt{\sum_{i,j=1}^N {\rm SE}\left(F_{ij}\right)^2}
    \sim \mathcal{O}\left(\frac{1}{N \sqrt{m}}\right).
\end{align}
Finally, we provide numerical evidence that the generalization error~\cite{caro2023out,banchi2021generalization} 
\begin{align}
    E_{\rm gen}(\{\tilde{U}_\ell\}_{1}^T) \equiv \calL(\{\tilde{U}_\ell\}_{1}^T,\calE_0, \tilde{\calE}_T)-\calL(\{\tilde{U}_\ell\}_{1}^T,\calS_0, \tilde{\calS}_T)
\end{align}
has the scaling $\mathcal{O}\left(\frac{1}{T N}\right)$,
as shown in Fig.~\ref{fig:gen_error} for an $n=4$ qubit clustering state generation task. Here we estimate the generalization error via a validation set independently sampled, while the proof is an open problem~\cite{caro2023out,banchi2021generalization}. The $1/N$ scaling agrees with classical machine learning results~\cite{srebro2010smoothness,yao2022mean}. 

\begin{figure}[t]
     \centering
     \includegraphics[width=0.48\textwidth]{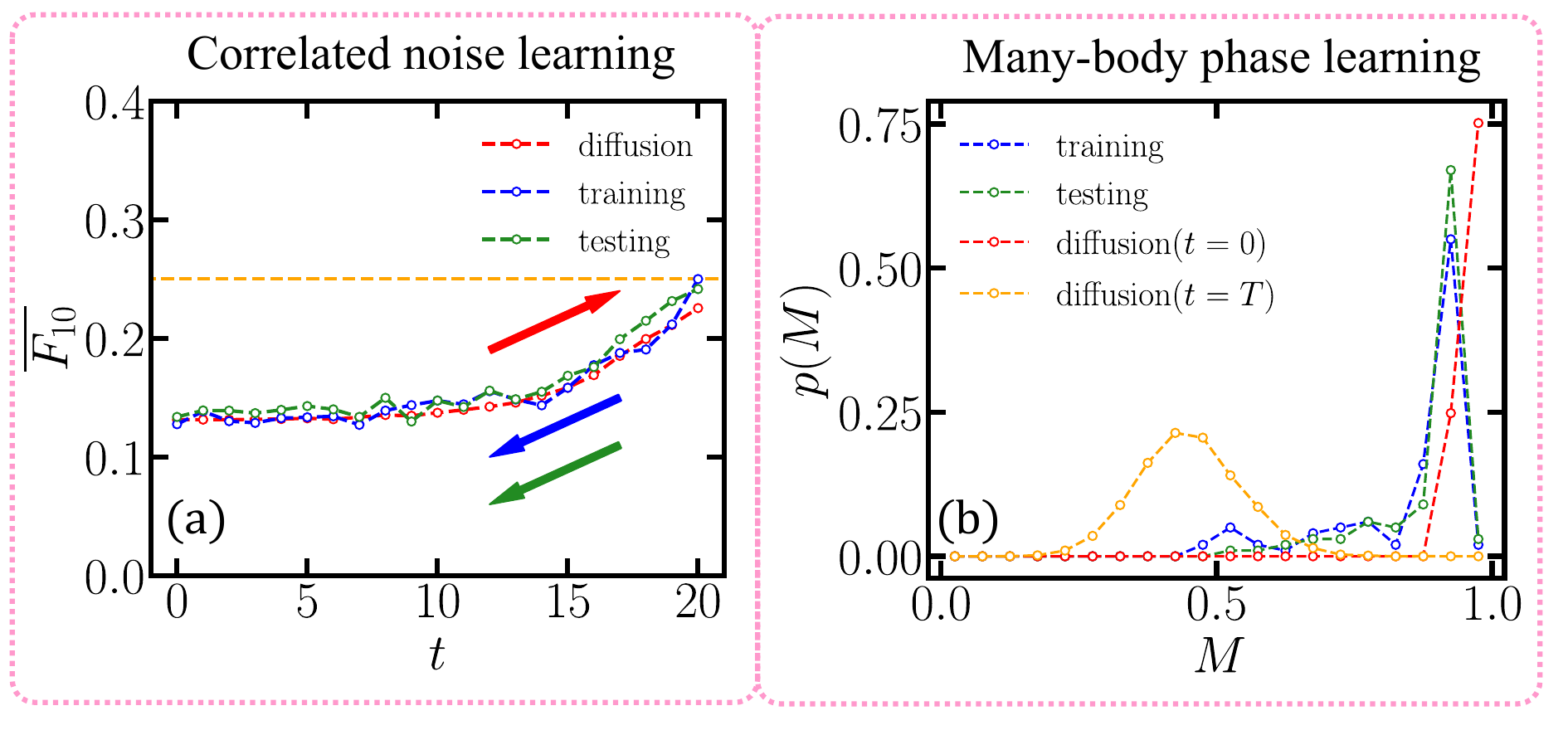}
     \caption{
     Generation of states with probabilistic correlated noise on a specific state in (a) and (b) states with ferromagnetic phase. In (a), average fidelity $\overline{F_{10}}$ between states at step $t$ and $\ket{10}$ for diffusion (red), training (blue) and testing (green) are plotted. In (b), we show the distribution of magnetization for generated data from training (blue) and testing (green) dataset, and compared to true data (red) and full noise (orange). Four qubits are considered in (b).
     }
     \label{fig:qddpm_cluster}
\end{figure}

\section{Applications}To showcase QuDDPM's applications, we consider a particular realization of QuDDPM with each unitary $U_k^{(i)}$ and $\tilde{U}_k$ implemented by the fast scrambling model~\cite{belyansky2020minimal}---layers of general single-qubit rotations in between homogeneous tunable entangling layers of all-to-all $ZZ$ rotations---and hardware efficient ansatz~\cite{kandala2017hardware}---layers of $X$ and $Y$ single-qubit rotations in between layers of nearest-neighbor control-$Z$ gates---separately (see Appendix~\ref{app:gate_set}). While the MMD distance characterization similar to Fig.~\ref{fig:MMD_cost_cluster_two_qubit} is presented in Appendix~\ref{app:distance_app}, we adopt more direct measures of performance in each application.

\subsection{Learning correlated noise}
When a real quantum device is programmed to generate a quantum state, it inevitably suffers from potentially correlated errors in the gate control parameters~\cite{harper2020efficient,chen2023learnability}. As a result, the generated states $\calS_0$ are close to the target state but have nontrivial coherent errors, which can be learned by QuDDPM.
We take a $2$-qubit example of the
target state $\ket{\Psi} = c_0\ket{00}+c_1\ket{01}+c_3\ket{11}$ under the influence of fully correlated noise, where $e^{-i\delta X_1 X_2}$ and $e^{-i\delta Z_1 Z_2}$ rotations  happen with probability $p$ and $1-p$. Here $X_k$ and $Z_k$ are Pauli operators for qubit $k$. In each case, the angle of rotation $\delta$ is uniformly sampled from the range $[-\delta_0, \delta_0]$. As the $\ket{10}$ component in the superposition only appears when $XX$ error happens, we can utilize average fidelity $\overline{F_{10}} = \E_{\tilde{S}_0} \lvert\braket{10|\psi^{(0)}}\rvert^2$ as the performance metric to estimate the error probability $p$ via $\tilde{p} = \overline{F_{10}}/\left(\lvert c_1\rvert^2  \E_\delta \left[\sin^2\delta\right]\right)$. We show a numerical example in Fig.~\ref{fig:qddpm_cluster}(a), where the generated ensemble average fidelity in training and testing, agrees with the theoretical prediction up to a finite sample size deviation.

\begin{figure}
    \centering
    \includegraphics[width=0.45\textwidth]{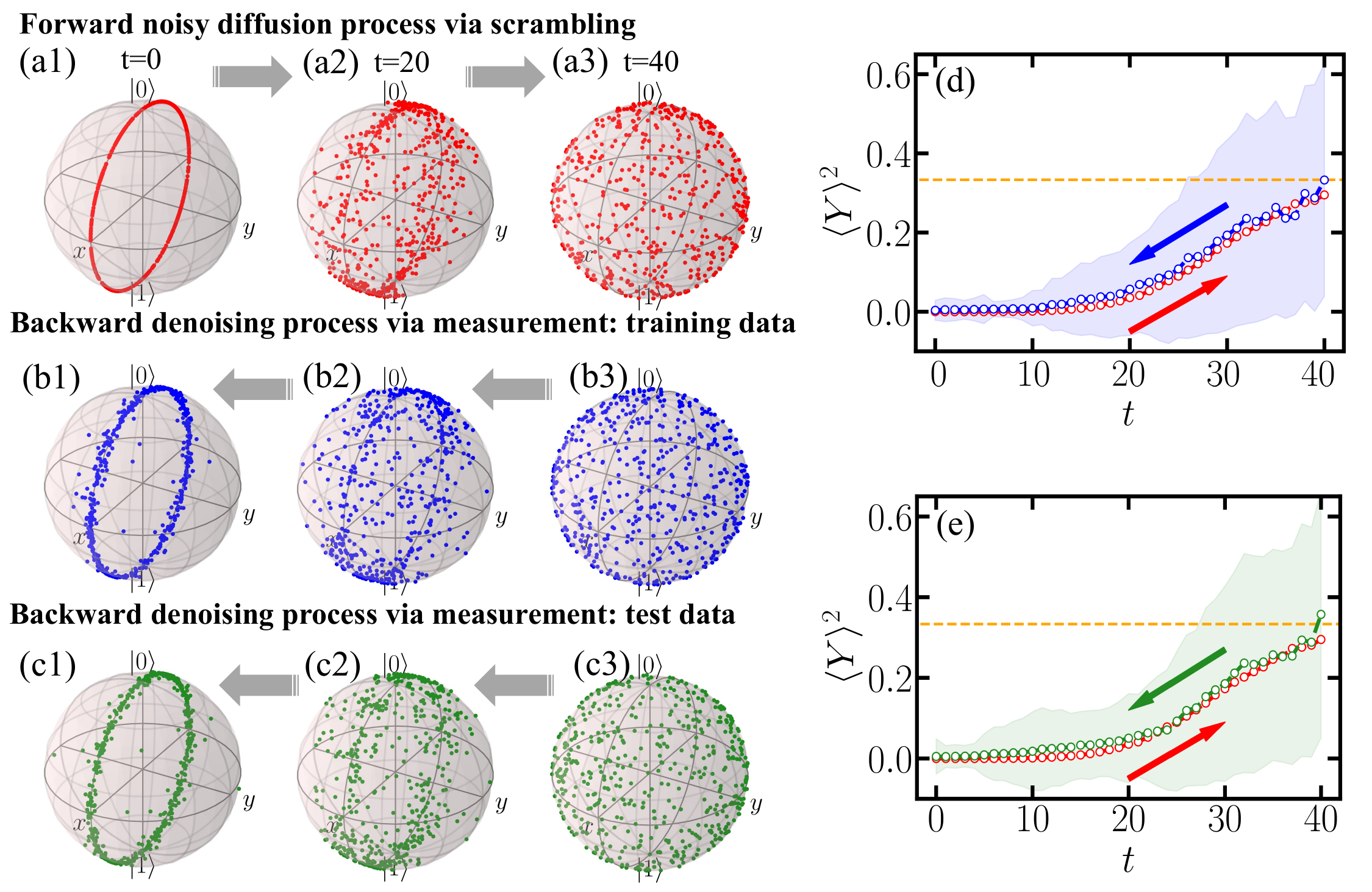}
    \caption{Bloch visualization of the forward (a1)-(a3) and backward (b1)-(b3),(c1)-(c3) process.
    (d),(e) deviation of generated states from unit circle in $X$-$Z$ plane. The deviation $\braket{Y}^2$ for forward diffusion (red), backward training (blue), and backward test (green) are plotted. The shaded area shows the sample standard deviation. 
    \label{fig:qddpm_circle}
    }
\end{figure}

\subsection{Learning many-body phases}
As a proof of principle, we take the simple and well-known transverse-field Ising model (TFIM) described by the Hamitonian
$
H_{\rm TFIM} = -\sum_i Z_i Z_{i+1} - g\sum_i X_i.
$
When $g$ increases from zero, the system undergoes a phase transition from an ordered ferromagnetic phase to a disordered phase, with the critical point at $g=1$. The states before diffusion are chosen from ground states of $H_{\rm TFIM}$ with $g\in [0.2, 0.4)$ uniformly distributed. To test the capability of QuDDPM, we utilize the magnetization,
$
M =  (\sum_i Z_i)/n,
$
to identify the phase of generated states from QuDDPM, and show its distribution in Fig.~\ref{fig:qddpm_cluster}(b). Most generated states (blue and green) of QuDDPM lives in the ferromagentic phase, and shows a sharp contrast to the random states (orange).


\subsection{Learning nontrivial topology}
We consider the ensemble of states with a ring structure---generated by applying a unitary on a single state, e.g., 
$
\ket{\psi_i } =e^{-i \bm x_i \cdot \bm G} \ket{0},
$ 
which models the scenario where one encodes the classical data $\bm x_i$ onto the quantum data $\psi_i$, as commonly adopted in quantum machine learning to solve classical problems~\cite{havlivcek2019supervised, schuld2021supervised, li2022concentration, huang2021experimental}.
We test QuDDPM with a single qubit toy example, where the generators are chosen as Pauli $Y$ and the rotation angles are uniform in $[0, 2\pi)$. In the QuDDPM training, we use the Wasserstein distance~\cite{oreshkov2009distinguishability} (see Appendix~\ref{app:wasserstein_dist}) to cope with the nontrivial topology. The forward noisy diffusion process on the sample data and the backward denoising process for training and testing are depicted in Fig.~\ref{fig:qddpm_circle}. To quantitatively evaluate the performance of QuDDPM, we evaluate the deviation by Pauli $Y$ expectation $\braket{Y}^2$ in Figs.~\ref{fig:qddpm_circle}(d) and~\ref{fig:qddpm_circle}(e), where gradual transition between zero and a Haar value of $1/3$ is observed in both forward diffusion and backward denoising.


\begin{acknowledgements}
Q.Z. and B.Z. acknowledges support from NSF CAREER Awards No. CCF-2240641 and No. ONR N00014-23-1-2296. X.C. acknowledges support from NSF CAREER Award No. DMS-2347760.
\end{acknowledgements}

\appendix

\section{On details of parameters}
We list hyperparameters and performance for all generative learning tasks in Table~\ref{tab:hyperparams} for reference, and state the targeted distribution of states to generate in the following. The major codes and data of the work can be found in Ref.~\cite{Github}. We also specify the cost function, among the two choices, maximum mean discrepancy (MMD) and Wasserstein distance.

\begin{table*}
\centering
\begin{tabular}
{|c|c|c|c|c|c|c|l|}
\hline
   Generation task  &  $n$ &  $n_A$ & $L$ &  $T$ &  $N$ & Cost function & \multicolumn{1}{c|}{Performance}\\
   \hline
   \begin{tabular}{@{}l@{}}
   Clustered state\\
   (Fig.~\ref{fig:schematic} of main text \\ 
    and Figs.~\ref{fig:MMD_cost_cluster} and \ref{fig:cluster_supp}a of Appendix) 
   \end{tabular}
   & $1$ & $1$ & $4$ & $20$ & $100$ & MMD & 
   \begin{tabular}{@{}l@{}}
   $\overline{F_{0,\rm data}}=0.987\pm 0.013$\\ 
   $\overline{F_{0,\rm tr}} = 0.992\pm 0.021$\\ 
   $\overline{F_{0,\rm te}} = 0.993 \pm 0.014$
   \end{tabular}
   \\ \hline 
   \begin{tabular}{@{}l@{}}
   Clustered state\\
   (Fig.~\ref{fig:MMD_cost_cluster_two_qubit} of main text and \\ 
    Fig.~\ref{fig:cluster_supp}b of Appendix.) 
   \end{tabular}
   & $2$ & $1$ & $6$ & $20$ & $100$ & MMD & 
   \begin{tabular}{@{}l@{}}
   $\overline{F_{0,\rm data}}=0.977 \pm 0.014$ \\
   $\overline{F_{0,\rm tr}} = 0.952\pm 0.070$ \\ 
   $\overline{F_{0,\rm te}} = 0.944 \pm 0.075$
   \end{tabular}
   \\ \hline
   Clustered state [Figs.~\ref{fig:gen_error}(a) and \ref{fig:gen_error}(b)] & $4$ & $2$ & $8$ & 20 [Fig.~\ref{fig:gen_error}(b)] & 100 [Fig.~\ref{fig:gen_error}(a)] & MMD & 
   see Figs.~\ref{fig:gen_error}\\
   \hline
   Correlated noise [Fig.~\ref{fig:qddpm_cluster}(a)] & $2$ & $2$ & $6$ & $20$ & $500$ & MMD& 
   \begin{tabular}{@{}ll@{}}
   Data: & 0.129\\
   Training: & 0.128\\ 
   Testing: & 0.133
   \end{tabular}
   \\ \hline
   Many-body phase [Fig.~\ref{fig:qddpm_cluster}(b)] & $4$ & $2$ & $12$ & $30$ & $100$ & MMD & 
   \begin{tabular}{@{}ll@{}}
   \noindent\rlap{Measured by magnetization.}\\
   Data: & 1\\
   Training: & 0.9\\
   Testing: & 0.96
   \end{tabular}
   \\ \hline
   Circular states (Fig.~\ref{fig:qddpm_circle}) & $1$ & $2$ & $6$ & $40$ & $500$ & Wasserstein & 
   \begin{tabular}{@{}ll@{}}
   Data: & $\overline{\braket{Y}^2} =0$ \\
   Training: & $\overline{\braket{Y}^2} =0.00367\pm 0.0251$\\
   Testing: &  $\overline{\braket{Y}^2} = 0.00506\pm 0.0439$
   \end{tabular}
   \\ \hline
\end{tabular}
\caption{List of hyperparameters of quantum denoising diffusion probabilistic model (QuDDPM) and its performance in different generative learning tasks. To test the performance after training, we randomly sample $N_{\rm te}$ random noise states, and perform the optimized backward PQC to generate the sampled data. Dataset size $N_{\rm tr}=N_{\rm te}=N$. $n$ is the number of data qubit $n$ and $n_A$ the ancilla qubit. $L$ is the PQC depth. $T$ is the diffusion steps. For cluster state generation, we evaluate the average fidelity with the center state in each cluster, i.e. $\ket{0}$ for single qubit and $\ket{0,0}$ for $2$ qubit (see Appendix~\ref{app:cluster}).}
\label{tab:hyperparams}
\end{table*}

\begin{table*}
\centering
\begin{tabular}
{|c|c|c|c|c|c|c|l|}
\hline
   Model  &  $n$ &  $n_A$ & No. variational parameters &  $N$ & Cost funcion & \multicolumn{1}{c|}{Performance}\\
   \hline
   QuDDPM
   & $2$ & $1$ & $720$ & $100$ & MMD & 
   \begin{tabular}{@{}l@{}}
   $\overline{F_{0,\rm tr}} = 0.947\pm 0.070$\\ 
   $\overline{F_{0,\rm te}} = 0.948\pm 0.061$
   \end{tabular}
   \\ \hline 
   QuDT
   & $2$ & $1$ & $720$ & $100$ & MMD & 
   \begin{tabular}{@{}l@{}}
   $\overline{F_{0,\rm tr}} = 0.572\pm 0.321$\\ 
   $\overline{F_{0,\rm te}} = 0.465\pm 0.349$
   \end{tabular}
   \\ \hline
   QuGAN
   & $2$ & $1$ & 
   \begin{tabular}{@{}l@{}}
   $720$ (generator)\\
   $96$ (discriminator)
    \end{tabular}
   & $100$ & Error probability based Cost function & 
   \begin{tabular}{@{}l@{}}
   $\overline{F_{0,\rm tr}} = 0.570\pm 0.250$\\ 
   $\overline{F_{0,\rm te}} = 0.443\pm 0.269$
   \end{tabular}
   \\ \hline
\end{tabular}
\caption{List of hyperparameters of quantum denoising diffusion probabilistic model (QuDDPM), quantum direct transport (QuDT) and quantum generative adversarial networks (QuGANs) for generating a clustered state in Fig.~\ref{fig:MMD_cost_cluster_two_qubit} and Fig.~\ref{fig:benchmark}. Dataset size $N_{\rm tr}=N_{\rm te}=N$. $n$ is the number of data qubit $n$, and $n_A$ is the ancilla qubit. In performance, the mean fidelity with the center state of the cluster $\ket{0,0}$ is $\overline{F_0} = \mathbb{E}_{\ket{\psi} \in \tilde{S}}|\braket{0,0|\psi}|^2$ and for true data it is $\overline{F_{0,\rm data}}=0.977 \pm 0.014$ (see Appendix~\ref{app:benchmark}).}
\label{tab:hyperparams_benchmark}
\end{table*}

In Figs.~\ref{fig:schematic}(b) and~\ref{fig:schematic}(c), we consider data in the form of $\ket{\psi^{(0)}} \sim \ket{0} + \epsilon c_1 \ket{1}$ up to a normalization constant where $\Re{c_1},\Im{c_1}\sim \calN(0,1)$ is Gaussian distributed, and the scale factor is chosen as $\epsilon = 0.08$.
We have taken single qubit rotations as $U_k^{(i)}$, where each angle is randomly sampled, e.g., from the uniform distribution $U[-\pi/8,\pi/8]$. 
In the generation of states with probabilistic correlated noise of Fig.~\ref{fig:qddpm_cluster}(a), the noise perturbation range is $\delta \in [-\pi/3, \pi/3]$.

\section{On Wasserstein distance}
\label{app:wasserstein_dist}
For pure states, choosing the quantum trace distance (equaling infidelity) $D^2(\ket{\phi}, \ket{\psi})=1-\lvert\braket{\phi|\psi}\rvert^2$, then Kantorovich's formulation for optimal transportation involves solving the following optimization problem
\begin{align}
    \mathsf{OPT} := \min_{\pi\in\Pi(\calE_1,\calE_2)}\int_{V \times V}D^p(\ket{\phi},\ket{\psi})\,d\pi(\ket{\phi}, \ket{\psi})
    \label{eq:wass_distance}
\end{align}
for $p \geq 1$, where $\Pi(\calE_1, \calE_2)$ is the set of admissible \emph{transport plans} (i.e., \emph{couplings}) of probability distributions on $V \times V$ such that $\pi(B \times V) = \calE_1(B)$ and $\pi(V \times B) = \calE_2(B)$ for any measurable $B \subset V$; namely $\Pi(\calE_1,\calE_2)$ stands for all distributions with marginals as $\calE_1$ and $\calE_2$. The Kantorovich problem in~\eqref{eq:wass_distance} induces a metric, known as the \emph{$p$-Wasserstein distance}, on the space $\calP_p(V)$ of probability distributions on $V$ with finite $p$th moment. In particular, the $p$-Wasserstein distance $W_p(\calE_1, \calE_2) = \mathsf{OPT}^{1/p}$, and it has identifiability in the sense that $W_p(\calE_1, \calE_2) = 0$ if and only if $\calE_1 = \calE_2$. More details can be found in Appendix.~\ref{app:wasserstein_compute}.

\section{On related works}
The proposed QuDDPM represents an application of the theoretical idea of quantum information scrambling~\cite{nahum2017quantum,nahum2018operator} in the forward diffusion, and its backward denoising also connects to the measurement-induced phase transitions~\cite{skinner2019measurement}. Here we point out that our forward diffusion circuits include an actual implementation of scrambling as part of the QuDDPM algorithm, while previous papers utilize tools from the study of quantum scrambling to understand quantum neural networks~\cite{shen2020,garcia2022quantifying}.

Below, we discuss about several related works. Reference~\cite{rodriguez2019identifying} utilizes a diffusion map (DM) for unsupervised learning of topological phases and~\cite{CHEN2021303} proposes a diffusion $K$-means manifold clustering approach based on the diffusion distance~\cite{COIFMAN20065}. A quantum DM algorithm has also been considered~\cite{sornsaeng2021quantum} for potentially quantum speedup. However, these works are not on generative learning and do not consider any denoising process. 
Layerwise training~\cite{skolik2021layerwise} also attempts to divide a training problem into subtasks in nongenerative learning; however, the performance of such strategies is limited~\cite{campos2021training}. QuDDPM integrates the division of training task and an actual noisy diffusion process to enable provable benefits in training. After the completion of our work, we became aware of a recent paper~\cite{parigi2023quantum}, where hybrid quantum-classical DDPM is proposed. Our work focuses on quantum data; provides explicit construction of quantum diffusion and quantum denoising, loss function, training strategy and error analyses; and presents several applications.

\section{On future directions}
Finally, we point out some future directions, besides various applications of QuDDPM in learning quantum systems. Our current QuDDPM architecture requires a loss function based on fidelity estimations. For large systems, fidelity estimation can be challenging to implement. Toward efficient training in large systems, alternative loss functions can be adopted. For example, one may consider adopting another quantum circuit trained for telling the ensembles apart, such as a quantum convolutional neural network~\cite{cong2019quantum} and other circuit architecture~\cite{zhang2022fast}. Such an approach will combine QuDDPM and the adversarial agent in QuGAN to resolve the training problem in QuGAN. Another future direction is controlled diffusion~\cite{song2021solving}: when the ensemble has special symmetry, one can restrict the forward scrambling, the backward denoising and the random noise ensemble to that symmetry. It is also an interesting open problem of how to introduce a control knob such that QuDDPM can learn multiple distributions and generate states according to an input requesting one of the distributions.

Besides learning quantum errors and many-body phases, quantum sensor networks~\cite{zhuang2019physical,xia2021quantum} provide another application scenario of QuDDPM. In this scenario, one sends quantum probes to sense a unitary physical process; on the return side, the receiver will collect a pure state from a distribution in the ideal case. It is an open problem of how QuDDPM can be adopted to provide an advantage in quantum sensing.

Furthermore, the data can also be quantum states encoding classical data, where QuDDPM can also process classical data. Benchmarking QuDDPM and previous algorithms for classical data generative learning is an open direction.

\section{Review of Classical DDPM} 
\label{app:review}

In classical DDPM, forward diffusion process first gradually converts the observed data to a simple random noise based on non-equilibrium thermodynamics, and then an associated reverse-time process is learned to generate samples with target distribution from the noise~\cite{sohl2015deep,ho2020denoising,song2021scorebased,schneuing2022structurebased}

Classical DDPM can be viewed as a latent variational autoencoder (VAE) model with stochastic hidden layers of the same dimension as the input data. 
The forward diffusion process simply adds a sequence of small amount of Gaussian perturbations to the data sample $x_0$ in $T$ steps to produce the noisy samples $x_1, \dots, x_T$ according to a linear Markov chain:
\begin{equation}
\label{eqn:ddpm_forward}
q(x_t \mid x_{t-1}) = N\bigl(x_t \mid \sqrt{1-\beta_t} x_{t-1}, \beta_t I\bigr),
\end{equation}
where $\beta_t \in (0, 1)$ is a given noise schedule such that $q(x_t)$ converges to $N(0, I)$. Usually the noise schedule satisfies $\beta_1 < \beta_2 < \cdots < \beta_T$ so that larger step sizes are used when the samples become more noisy. In the reverse-time process, we would like to sample from $q(x_{t-1} \mid x_t)$, which allows us to generate new data sample from the noise distribution $q(x_T)$. However, the conditional distribution $q(x_{t-1} \mid x_t)$ is intractable and approximated by a decoder of the form:
\begin{equation}
\label{eqn:ddpm_backward}
p_{\theta}(x_{t-1} \mid x_t) = N\bigl(x_{t-1} \mid \mu_{\theta}(x_t, t), \sigma^2_t I\bigr),
\end{equation}
where the time-dependent conditional mean vector $\mu_{\theta}$ is parameterized by a neural network. Then the training of $\mu_{\theta}$ can be efficiently performed by maximizing an evidence lower bound (ELBO) for the log-likelihood function $\log p_{\theta}(x_0)$. Refs.~\cite{sohl2015deep,ho2020denoising} showed that the ELBO can be expressed as a linear combination of (relative) entropy terms for Gaussian distributions that can be evaluated analytically into a simple weighted $L^2$ loss function.

In Ref.~\cite{song2021scorebased}, it is shown that DDPM forward process is the discretized version of the following continuous stochastic differential equation (SDE): for $t \in [0, T]$,
\begin{equation}
    \label{eqn:ddpm_continuous_forward}
    \mathrm{d} x(t) = -{1\over2} \beta(t) x(t) \; \mathrm{d} t + \sqrt{\beta(t)} \; \mathrm{d} w(t),
\end{equation}
where $w(t)$ is the standard Brownian motion, and the DDPM decoder~\eqref{eqn:ddpm_backward} corresponds to the discretization of a reverse-time SDE given by
\begin{equation}
    \label{eqn:ddpm_continuous_backward}
    \begin{gathered}
        \mathrm{d} x^{\leftarrow}(t) = -{1\over2}\beta(t) [ x^{\leftarrow}(t) + 2 \nabla_x \log p_t(x^\leftarrow(t)) ] \; \mathrm{d} t \\
        + \sqrt{\beta(t)} \; \mathrm{d} w^{\leftarrow}(t),
    \end{gathered}
\end{equation}
where $w^{\leftarrow}(t)$ is the standard Brownian motion flowing back in time and $p_t(\cdot)$ is the probability density of $x(t)$. The forward process $x(t)$ in~\eqref{eqn:ddpm_continuous_forward} and the reverse-time process $x^\leftarrow(T-t)$ in~\eqref{eqn:ddpm_continuous_backward} have the same marginal probability densities~\cite{ANDERSON1982313}. Thus in the forward flow, estimation of the conditional mean vector $\mu_{\theta}(\cdot,t)$ in DDPM is equivalent to learn the time-dependent score $\nabla_x \log p_t(\cdot)$ that contains the full information of data distribution $p_0$. Such score estimate is subsequently used in the reverse time process for sampling new data from $p_0$.

\section{Reproducing kernel Hilbert spaces and maximum mean discrepancy}
\label{app:RKHS_MMD} 
A bivariate function $F : V \times V \to \mathbb{R}$ is a positive definite kernel if $\sum_{i,j=1}^m c_i c_j F(\ket{\phi}_i, \ket{\phi}_j) \geq 0$ for all $m \geq 1, \ket{\phi}_1, \dots, \ket{\phi}_m \in V$, and $c_1, \dots, c_m \in \mathbb{R}$. From the Moore-Aronszajn theorem (see e.g.,~\cite[Theorem 7.2.4]{HsingEubank2015_Wiley}), for every symmetric and positive-definite kernel $F$, there is a unique Hilbert space $\calH := \calH(F)$ of real-valued functions on $V$ such that:
\begin{enumerate}[(i)]
    \item $F(\cdot, \ket{\phi}) \in \calH$ for each $\ket{\phi} \in V$;
    \item $g(\ket{\phi}) = \langle g, F(\cdot, \ket{\phi}) \rangle_{\calH}$ for each $g \in \calH$ and $\ket{\phi} \in V$.
\end{enumerate}
The space $\calH$ of functions $\{g : V \to \mathbb{R}\}$ is called the reproducing kernel Hilbert space (RKHS) associated with the kernel $F$. Property (i) defines a feature map (a.k.a. RKHS map) $V \to \calH$ via $\ket{\phi} \mapsto F(\cdot, \ket{\phi})$, and property (ii) is the reproducing kernel property for the evaluation functionals. In addition, we have for all $\ket{\phi}, \ket{\psi} \in V$,
\begin{equation*}
    F(\ket{\phi}, \ket{\psi}) = \langle F(\cdot, \ket{\psi}), \; F(\cdot, \ket{\phi}) \rangle_{\calH}=\lvert\braket{\phi|\psi}\rvert^2.
\end{equation*}
Based on the kernel $F$, the (squared) maximum mean discrepancy (MMD) loss between two state distributions is 
\begin{align}
    &\calD_\text{MMD}(\calE_1, \calE_2) 
    \nonumber
    \\
    &:= \sup_{g \in \calB}\bigl\lvert\langle \E_{\ket{\phi}\sim\calE_1}[F(\cdot, \ket{\phi})] - \E_{\ket{\psi}\sim\calE_2}[F(\cdot, \ket{\psi})],\; g \rangle_{\calH}\bigr\rvert^2,
    \label{MMD_def}
\end{align}
where $\calB$ is the unit ball in $\calH$ centered at the origin and $\langle,\rangle_{\calH}$ denotes the inner product in the RKHS.

By duality of $\calH$, we have
\begin{align*}
    &\calD_\text{MMD}(\calE_1, \calE_2)\\
    &\quad=\norm{\E_{\ket{\phi}\sim\calE_1} [F(\cdot, \ket{\phi})] - \E_{\ket{\psi}\sim\calE_2}[F(\cdot, \ket{\psi})]}_{\calH}^2.
\end{align*}
Therefore, we may calculate the MMD loss as following
\begin{align*}
    &\calD_\text{MMD}(\calE_1, \calE_2)\\
    &\quad= \langle \E_{\ket{\phi}\sim\calE_1} [F(\cdot, \ket{\phi})], \; \E_{\ket{\phi'}\sim\calE_1} [F(\cdot, \ket{\phi'})] \rangle_{\calH} \\
    &\qquad + \langle \E_{\ket{\psi}\sim\calE_2}[F(\cdot, \ket{\psi})], \; \E_{\ket{\psi'}\sim\calE_2}[F(\cdot, \ket{\psi'})] \rangle_{\calH} \\
    &\qquad - 2 \langle \E_{\ket{\phi}\sim\calE_1} [F(\cdot, \ket{\phi})], \; \E_{\ket{\psi}\sim\calE_2}[F(\cdot, \ket{\psi})] \rangle_{\calH} \\
    &\quad= \E_{\ket{\phi}, \ket{\phi'} \sim \calE_1} [F(\ket{\phi}, \ket{\phi'}]) \\
    &\qquad + \E_{\ket{\psi}, \ket{\psi'} \sim \calE_2} [F(\ket{\psi}, \ket{\psi'})] \\
    &\qquad - 2 \E_{\ket{\phi} \sim \calE_1, \ket{\psi} \sim \calE_2} [F(\ket{\phi}, \ket{\psi})],
\end{align*}
where all random states $\ket{\phi}, \ket{\phi'}, \ket{\psi}, \ket{\psi'}$ are drawn independently. Hence, we have established the connection
\begin{equation*}
    \calD_\text{MMD}(\calE_1, \calE_2) = \overline{F}(\calE_1, \calE_1) + \overline{F}(\calE_2, \calE_2)-2\overline{F}(\calE_1, \calE_2),
\end{equation*}
where $\overline{F}(\calE_1, \calE_2) := \E_{\ket{\phi} \sim \calE_1, \ket{\psi} \sim \calE_2} [F(\ket{\phi}, \ket{\psi})]$. In particular, if $F = \lvert\braket{\phi|\psi}\rvert^2$ is state-wise fidelity, then the resulting MMD corresponds to the mean fidelity defined in Eq.~\eqref{eqn:fidelity} of the main text.

\section{Computation of Wasserstein distance}
\label{app:wasserstein_compute}
In the discrete or empirical cases, where \(\calS_1\) and \(\calS_2\) are supported over finite number of state vectors \(\{\ket{\phi_i}\}_{i=1}^m\) and \(\{\ket{\psi_j}\}_{j=1}^n\), computation of Wasserstein distance can be cast into a linear program \cite{peyr2020computational}. To this end, let \(a\) and \(b\) be histograms representing \(\calS_1\) and \(\calS_2\), respectively. Define the \(m\times n\) cost matrix \(C\) by \(C_{i,j}\coloneqq D^2(\ket{\phi_i}, \ket{\psi_j})\), where $\ket{\phi_i}$ and $\ket{\psi_j}$ are state vectors from sampled ensemble state $\calS_1$ and $\calS_2$, respectively. For pure states, the cost matrix reduces to a function of infidelity, i.e., $C_{i,j}=1-F^2(\ket{\phi_i}, \ket{\psi_j})$. Then we have
\begin{align*}
W_2(\calS_1,\calS_2)=\min_{P}\quad&\langle P, C\rangle,\\
\text{s.t.}\quad&P\bm{1}_n=a,\\
\quad&P^\top\bm{1}_m=b,\\
\quad&P\geq0.
\end{align*}

Furthermore, we note that the above generalization of Wasserstein distance to characterize separation between two ensembles of quantum states is different from the case of Refs.~\cite{chakrabarti2019quantum,de2021quantum,kiani2022learning}, where only a pair of quantum states are considered.

\section{SWAP test}
\label{app:swap}
In the QuDDPM framework in the main text, we need to evaluate the fidelities between states $\ket{\psi}$ from real diffusion ensemble $\calS_k$ and the ones $\ket{\tilde{\psi}}$ from backward generated ensemble $\tilde{\calS}_k$. For any two pure states $\ket{\psi}$ and $\ket{\tilde{\psi}}$, one can perform the SWAP test to obtain the fidelity, which is illustrated in Fig.~\ref{fig:swap_test}. The SWAP test consists of two Hadamard gate and a controlled-swap gate applied on $2n+1$ qubits. Given the input $\ket{0,\psi,\tilde{\psi}}$, the output state ahead of measurement is
\begin{equation}
\begin{aligned}
\ket{0,\psi,\tilde{\psi}}&\rightarrow \frac{1}{2}\ket{0}\left(\ket{\psi,\tilde{\psi}}+ \ket{\psi,\tilde{\psi}}\right)\\
&\quad+ \frac{1}{2}\ket{1}\left(\ket{\psi,\tilde{\psi}} - \ket{\psi,\tilde{\psi}}\right),
\end{aligned}
\end{equation}

then the probability of measure $0$ on the first qubit is 
\begin{equation}
\begin{aligned}
    &P({\rm first \: qubit\:in}\ket{0})\\
    &\quad= \frac{1}{4}\left(\bra{\psi,\tilde{\psi}} + \bra{\psi,\tilde{\psi}}\right)\left(\ket{\psi,\tilde{\psi}} + \ket{\psi,\tilde{\psi}}\right) \\
    &\quad= \frac{1}{2} + \frac{1}{2}\lvert\braket{\psi|\tilde{\psi}}\rvert^2
\end{aligned}
\end{equation}
which directly indicates the fidelity.
\begin{figure}
    \centering
    \includegraphics[width=0.4\textwidth]{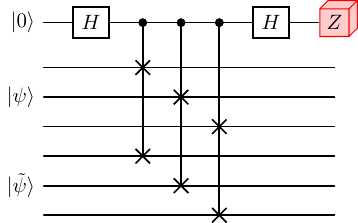}
    \caption{Circuit implementation of swap test. We show an example of swap test between two $3$-qubit state $\ket{\psi}, \ket{\tilde{\psi}}$. A $Z$-basis measurement is performed at the end.}
    \label{fig:swap_test}
\end{figure}
To implement the swap test between two $n$-qubit state $\ket{\psi}$ and $\ket{\tilde{\psi}}$, in general we need $n$ Fredkin gate between every pair of qubits of $\ket{\psi}$ and $\ket{\tilde{\psi}}$, where it is sufficient to implement a Fredkin gate among a control qubit and two target qubits via $5$ two-qubit gate~\cite{smolin1996five}. Therefore, in general $\mathcal{O}(5n)$ two-qubit gates are enough to perform the swap test in Fig.~\ref{fig:swap_test} regardless the locality of these gates.


\section{Forward and backward circuits}
\label{app:gate_set}
The froward noisy process is implemented by the fast scrambling model~\cite{belyansky2020minimal} with controllable parameters on $n$ data qubits to mimic the diffusion process in the Hilbert space. The diffusion circuit is implemented as (see Fig.~\ref{fig:circuit} (a))
\begin{align}
    U(\bm \phi_t, g_t) = \prod_{t=1}^T W_t(g_t) V_t(\bm \phi_t),
\end{align}
where $V_t$ consists of general single qubit rotations on each qubit as
\begin{equation}
    V_t(\bm \phi_t) = \bigotimes_{k=1}^{n} e^{-i\phi_{t,3k+2}Z_k/2}e^{-i\phi_{t,3k+1}Y_k/2}e^{-i\phi_{t,3k}Z_k/2},
\end{equation}
and homogeneous entangling layer $W_t$ consists of ZZ rotation on every pair of qubits, 
\begin{equation}
\begin{aligned}
    W_t(g_t) &= \exp\left[-i\frac{g_t}{2\sqrt{n}}\sum_{k_1<k_2}Z_{k_1}Z_{k_2}\right] \\
    &= \prod_{k_1<k_2} \exp\left[-i\frac{g_t}{2\sqrt{n}}Z_{k_1}Z_{k_2}\right].
\end{aligned}
\end{equation}
With a tunable range of random rotation angles $\phi$ and $g$, we can control the diffusion speed of original quantum states ensemble $\calS_0$ in the Hilbert space towards Haar random states ensemble.

The backward denoising process consists of $T$ steps, where the operation at every step is implemented by a PQC $\tilde{U}_t (\bm \theta_t)$ followed by measurements on ancillae.
\begin{equation}
    \begin{gathered}
        \Phi_t \left(\ket{\tilde{\psi}^{(t)}}\right) = \frac{\left(\Pi_A \otimes \bI_D\right) \tilde{U}_t \ket{\tilde{\psi}^{(t)}} }{\sqrt{\braket{\tilde{\psi}^{(t)}|\tilde{U}_t^\dagger\left(\Pi_A \otimes \bI_D\right) \tilde{U}_t|\tilde{\psi}^{(t)}}}} \\
     = \ket{\bm z(t)}_A\otimes \ket{\tilde{\psi}^{(t-1)}}
    \end{gathered}
\end{equation}
where $\Pi_A = \ketbra{\bm z(t)}{\bm z(t)}_A$ is the POVM of measurement on ancillas in computational basis $\ket{\bm z(t)}_A$. Note that we do not make any specific constraint on the measurement results $\bm z(t)$, instead, we simply perform the measurement on ancillas and collect post-measurement state on data qubits $\ket{\tilde{\psi}^{(t-1)}}$. 

In general, the backward PQC can utilize any architecture as long as its expressivity can guarantee for the backward transport from ensemble $\calS_{t+1}$ to $\calS_t$. In this work, we adopt the hard-efficient ansatz~\cite{kandala2017hardware} which is universal and easy to implement in practical experiments. For a $L$-layer backward PQC $\tilde{U}_t$, in each layer it consists of single qubits rotations along X and Y axes on each qubit, followed by control-Z gate on nearest neighbors as (see Fig.~\ref{fig:circuit} (b)) 
\begin{align}
    \tilde{U}_t(\bm \theta_t) = \prod_{\ell=1}^L \tilde{W}_t \tilde{V}_t(\bm \theta_t)
\end{align}
where
\begin{align}
    \tilde{V}_t(\bm \theta_t) &= \bigotimes_{k=1}^{n} e^{-i\theta_{t,2k+1}Y_k/2}e^{-i\theta_{t,2k}X_k/2}\\
    \tilde{W}_t &= \bigotimes_{k=1}^{\lfloor  (n-1)/2\rfloor} {\rm CZ}_{2k,2k+1}\bigotimes_{k=1}^{\lfloor n/2\rfloor} {\rm CZ}_{2k-1,2k}
\end{align}
with ${\rm CZ}_{k_1,k_2}$ to be the control-Z gate on qubit $k_1$ and $k_2$. 
%
%
\begin{figure}[htbp]
    \centering
    \includegraphics[width=0.45\textwidth]{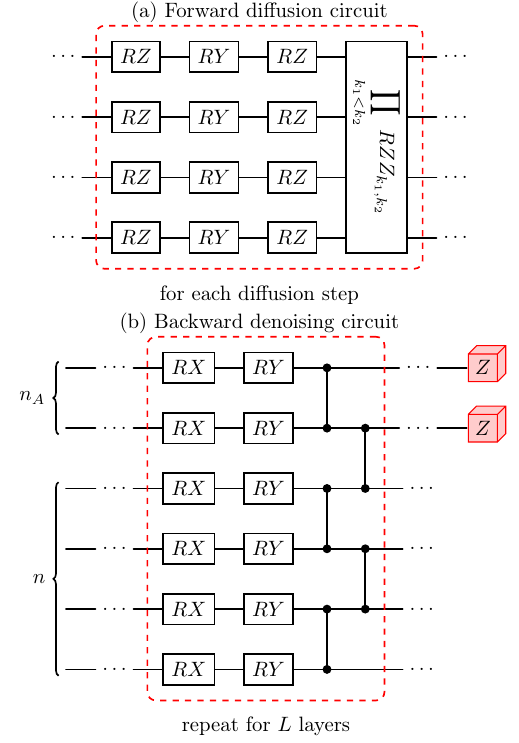}
    
    
    \caption{Quantum circuit architectures. In (a) we show the forward diffusion circuit for one diffusion step on a system of $n=4$ qubits. In (b) we show one-layer architecture of a $L$-layer backward denoising PQC on a system of $n=4$ data and $n_A=2$ ancilla qubits. RX, RY and RZ are Pauli X, Y and Z rotations. RZZ is the two-qubit ZZ rotation.}
    \label{fig:circuit}
\end{figure}

The whole backward denoising process can thus be represented as
\begin{align}
    \Phi = \Phi_1 \circ \Phi_2\circ \cdots \circ\Phi_{T-1}\circ \Phi_{T}.
\end{align}

\section{Additional details on distance metrics evaluation}
\label{app:distance_degenerate}

In Fig.~\ref{fig:distance}, we show a numerical comparison between the MMD distance (see Eq.~\ref{eq:nat_distance}) and Wasserstein distance (see Eq.~\ref{eq:wass_distance}) in different generation tasks. In the clustered state generation (Fig.~\ref{fig:distance}(a), (b)), both distance measure behaves similarly. However, for the circular state generation in Fig.~\ref{fig:distance}(c), (d) , the Wasserstein distance can characterize the diffusion of ensemble while the MMD distance fails. Note that only relative shift of MMD or Wasserstein loss through diffusion matters, while a comparison between their magnitude is unfair. In the following, we provide the simple proof on Lemma~\ref{lem:UnifCir_Haar}.

\begin{figure}[bp]
    \centering
    \vspace{1em}
    \includegraphics[width=0.45\textwidth]{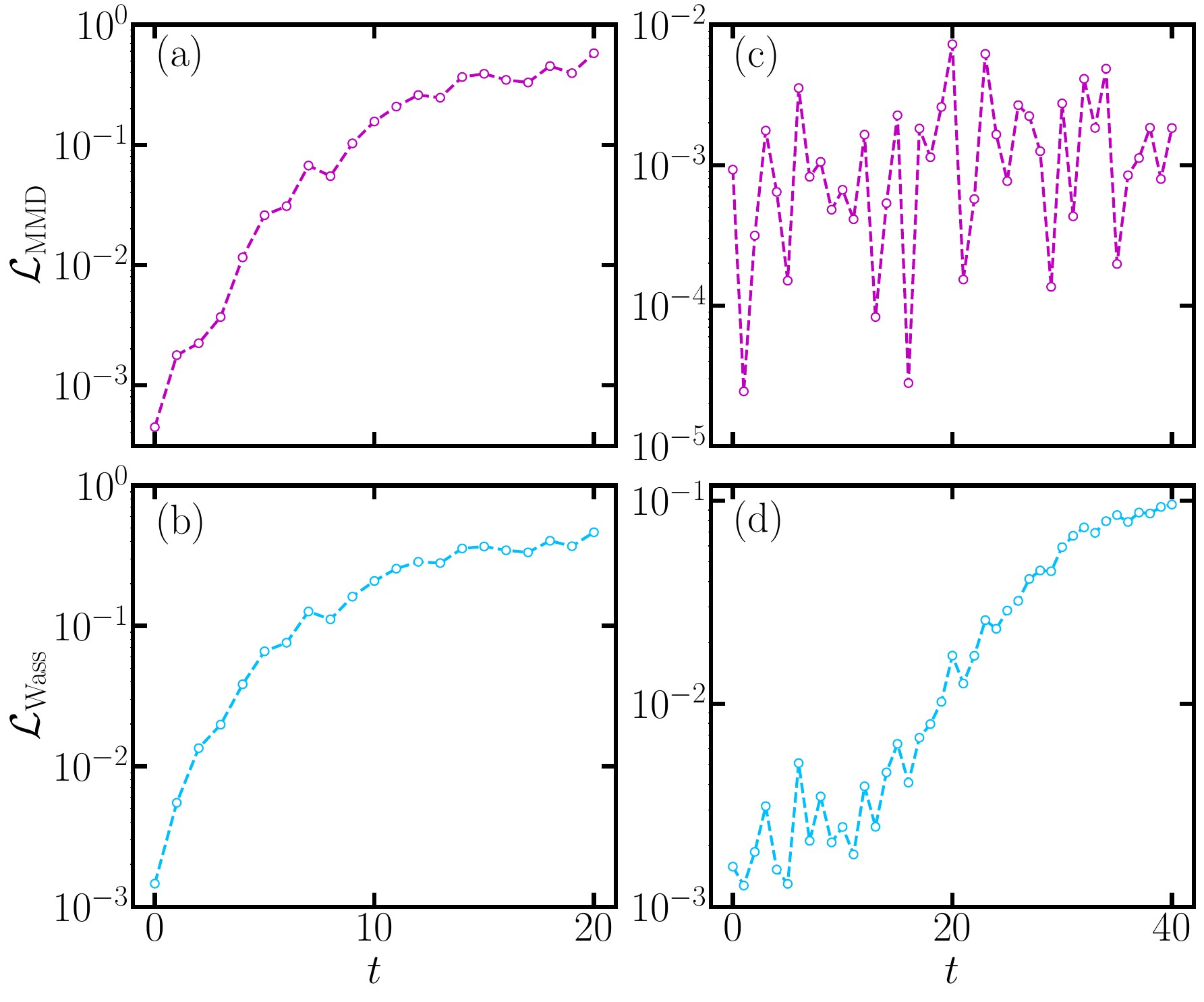}
    \vspace{-1em}
    \caption{MMD and Wasserstein loss between ensemble $\calS_0$ and ensemble through diffusion process at step $t$, $\calS_t$ in generation of cluster state (left) and circular state ensemble (right). For cluster state ensemble, MMD (a) and Wasserstein loss (c) behaves similarly. For circular state ensemble, MMD loss (c) vanishes while Wasserstein loss characterize the diffusion of distribution. 
    The data set size for cluster states is $|\calS|= 100$, and $|\calS|=500$ for circular states.}
    \label{fig:distance}
\end{figure}

\begin{lemma}
    \label{lem:UnifCir_Haar}
    Let $\calE_1$ be a uniform circular distribution on the Block sphere and $\calE_2$ be the Haar random state distribution. Then $\calD_{\rm MMD}(\calE_1, \calE_2) = 0$. 
\end{lemma}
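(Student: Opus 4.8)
The plan is to exploit the very special structure of the fidelity kernel: it is a quadratic trace kernel, which collapses the MMD onto a distance between first moments (ensemble-average density matrices), after which the statement becomes an explicit two-line computation. First I would write the statewise fidelity as $F(\ket{\phi},\ket{\psi}) = \lvert\braket{\phi|\psi}\rvert^2 = {\rm Tr}\!\left(\state{\phi}\,\state{\psi}\right)$. Then for any two ensembles the mean fidelity factors through the average states $\rho_a := \E_{\ket{\phi}\sim\calE_a}[\state{\phi}]$, namely $\overline{F}(\calE_a,\calE_b) = {\rm Tr}(\rho_a\rho_b)$. Substituting this into the MMD formula~\eqref{eq:nat_distance} yields
\begin{equation}
  \calD_{\rm MMD}(\calE_1,\calE_2) = {\rm Tr}(\rho_1^2) + {\rm Tr}(\rho_2^2) - 2\,{\rm Tr}(\rho_1\rho_2) = \lVert \rho_1 - \rho_2 \rVert_{\rm HS}^2,
\end{equation}
the squared Hilbert--Schmidt norm of the difference. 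Thus $\calD_{\rm MMD}(\calE_1,\calE_2)=0$ if and only if $\rho_1=\rho_2$, and the whole problem reduces to comparing the two ensemble-average states.

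Second, I would compute these two averages directly. For the Haar ensemble $\calE_2$, unitary invariance of the Haar measure forces $\rho_2$ to commute with every single-qubit unitary, so by Schur's lemma $\rho_2 = I/2$. For the uniform circular ensemble $\calE_1$, I would parametrize the states as $\ket{\psi(x)} = e^{-ixY}\ket{0} = \cos x\,\ket{0} + \sin x\,\ket{1}$ with $x$ uniform on $[0,2\pi)$, and integrate $\state{\psi(x)}$ over $x$. This uses only $\tfrac{1}{2\pi}\int_0^{2\pi}\cos^2 x\,dx = \tfrac{1}{2\pi}\int_0^{2\pi}\sin^2 x\,dx = 1/2$ together with $\tfrac{1}{2\pi}\int_0^{2\pi}\cos x\sin x\,dx = 0$, giving $\rho_1 = I/2$ as well.

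Since $\rho_1 = \rho_2 = I/2$, the Hilbert--Schmidt distance vanishes and $\calD_{\rm MMD}(\calE_1,\calE_2)=0$. There is no genuine obstacle in this argument; the only conceptual point---and exactly the reason this is the advertised failure example---is that the fidelity kernel is \emph{blind} to every feature of an ensemble beyond its first moment $\rho$. Any two ensembles sharing the same average state (here, the maximally mixed state) are therefore indistinguishable by $\calD_{\rm MMD}$, even though a great circle and the full Bloch sphere are geometrically very different distributions; this degeneracy is precisely what motivates the Wasserstein alternative used for the topological example.
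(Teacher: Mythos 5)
Your proof is correct, and it takes a genuinely different route from the paper's. The paper verifies the claim by brute force: it computes the three mean fidelities $\overline{F}(\calE_1,\calE_1)$, $\overline{F}(\calE_2,\calE_2)$, and $\overline{F}(\calE_1,\calE_2)$ separately (the Haar terms via the first-moment identity $\int dU\, U X U^\dagger = \tr(X)\,I/2$, the circular self-overlap via $\frac{1}{(2\pi)^2}\int\cos^2(\theta'-\theta)\,d\theta\,d\theta'$), finds each equals $1/2$, and adds them up. You instead observe that the quadratic kernel factors through first moments, $\overline{F}(\calE_a,\calE_b)=\tr(\rho_a\rho_b)$ with $\rho_a=\E_{\ket{\phi}\sim\calE_a}[\state{\phi}]$, so that $\calD_{\rm MMD}(\calE_1,\calE_2)=\norm{\rho_1-\rho_2}_{\rm HS}^2$, reducing everything to the single check $\rho_1=\rho_2=I/2$. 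Your version is both shorter and strictly more informative: it shows the degeneracy is not an accident of this particular pair of ensembles but a structural blindness of the fidelity kernel to all moments beyond the first, so \emph{any} two ensembles sharing an average state are MMD-indistinguishable --- which is exactly the right way to understand why the paper switches to the Wasserstein distance for the circular example. The only thing the paper's direct computation buys is that it doubles as a worked template for how such Haar averages are evaluated elsewhere in the appendix; mathematically your argument subsumes it.
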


\begin{proof}[Proof of Lemma~\ref{lem:UnifCir_Haar}]
    We can calculate the ensemble average fidelity by Haar integral as
    \begin{align*}
    \overline{F}(\calE_1, \calE_2) &= \frac{1}{2\pi}\int_{0}^{2\pi} d\theta \int d U |\braket{0|U^\dagger e^{-i\theta Y}|0}|^2 \\
    &= \frac{1}{2\pi}\int d\theta \frac{\tr(\ketbra{0}{0})\tr(e^{-i\theta Y}\ketbra{0}{0}e^{i\theta Y})}{2}\\
    &= \frac{1}{2},\\
    \overline{F}(\calE_1, \calE_1) &=  \frac{1}{(2\pi)^2}\int_{0}^{2\pi} d\theta d\theta^{\prime} |\braket{0|e^{i\theta^\prime Y} e^{-i\theta Y}|0}|^2\\
    &= \frac{1}{(2\pi)^2}\int_{0}^{2\pi} d\theta d\theta^{\prime}\cos^2\left(\theta^\prime-\theta\right)\\
    &= \frac{1}{2},\\
    \overline{F}(\calE_2, \calE_2) &=  \int d U dU^\prime |\braket{0|U^\dagger U^\prime|0}|^2\\
    &= \int dU \frac{\tr(\ketbra{0}{0})\tr(U\ketbra{0}{0}U^\dagger)}{2} \\
    &= \frac{1}{2}.
\end{align*}
Therefore, the MMD distance is
\begin{equation}
    \begin{aligned}
    \calD_{\rm MMD}(\calE_1, \calE_2)&= \overline{F}(\calE_1, \calE_1) + \overline{F}(\calE_2, \calE_2) -2\overline{F}(\calE_1, \calE_2)= 0,
    \end{aligned}
\end{equation}
which indicates its incapability to discriminate the difference between the circular state ensemble and Haar random states.
\end{proof}

For the true distribution, when $\calE_1 = \calE_2$, we should have $\calD_{\rm MMD}(\calE_1, \calE_2) = \calD_{\rm Wass}(\calE_1, \calE_2) = 0$ in theory, while in practice due to finite samples, both of them cannot vanish exactly (see Fig.~\ref{fig:distance}) left with a relatively small residual.

\section{Details of simulation}

In this work, the simulation of QuDDPM is implemented with the Python library \texttt{TensorCircuit}~\cite{zhang2023tensorcircuit} and Bloch sphere visualizations are plotted with the help from \texttt{QuTip}~\cite{johansson2012qutip}. The computation of the Wasserstein distance, on the other hand, is performed by the Python library \texttt{POT} \cite{flamary2021pot}. The major codes and data of the work are available in Ref.~\cite{Github}.

\begin{figure}
    \centering
    \includegraphics[width=0.45\textwidth]{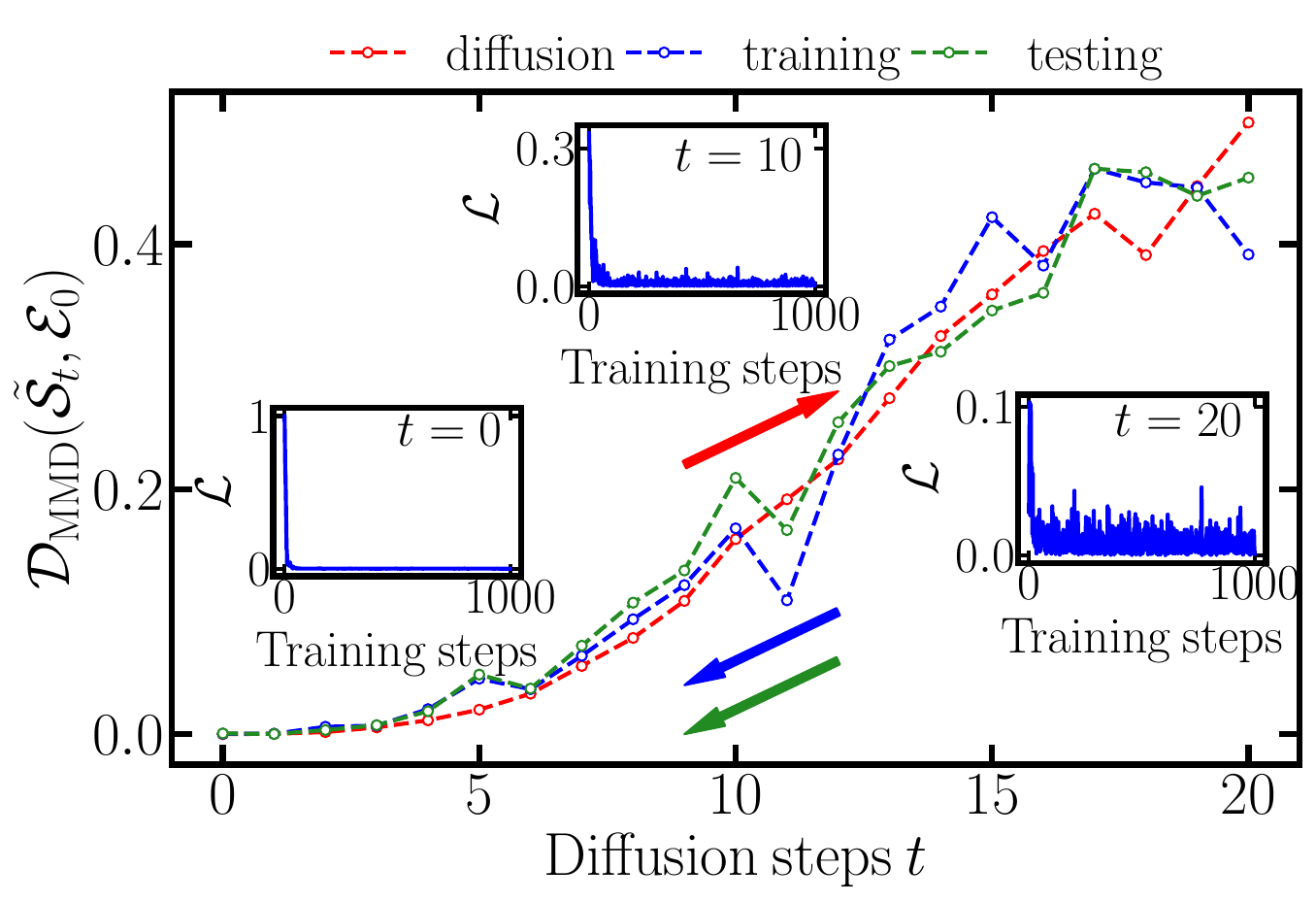}
    \caption{The decay of MMD distance $\calD$ between generated ensemble $\tilde{\calS}_t$ and target ensemble of states $\calE_0$ versus diffusion steps. The target is clustering ensemble around $\ket{0}$. Insets show the training loss history at different $t$. The converged distance at $t=0$ is less than $10^{-3}$. }
    \label{fig:MMD_cost_cluster}
\end{figure}

\section{Details of clustered state generation}
\label{app:cluster}

Here we provide more details in QuDDPM's performance in the task of clustered state generation. The single-qubit toy example in Fig.~\ref{fig:schematic} of the main text adopted the MMD distance in the loss function to generate states clustered around $\ket{0}$. In each of the 20 steps of training cycles, the loss function is minimized till convergence, as shown in the insets of Fig.~\ref{fig:MMD_cost_cluster} for $t=20,10,0$. To quantify the convergence, we also evaluate the MMD distance $\calD(\tilde{\calS}_t, \calE_0)$ between the true distribution $\calE_0$ and the trained ensemble of states $\tilde{\calS}_t$ throughout the training cycles (blue). As a reference, we also evaluate the MMD distance $\calD(\calS_t, \calE_0)$ between the ensembles $\calS_t$ in the forward diffusion steps and the true distribution $\calE_0$ throughout the diffusion steps (red). We a good agreement between the diffusion trajectory (red) and training trajectory (blue), showing a good convergence in each of the training cycles. This is consistent with the convergence of loss function as shown in the insets of Fig.~\ref{fig:MMD_cost_cluster}. At the same time, we also generate test data to verify the performance (green), showing a good agreement with the training results.

\begin{figure}
    \centering
    \includegraphics[width=0.5\textwidth]{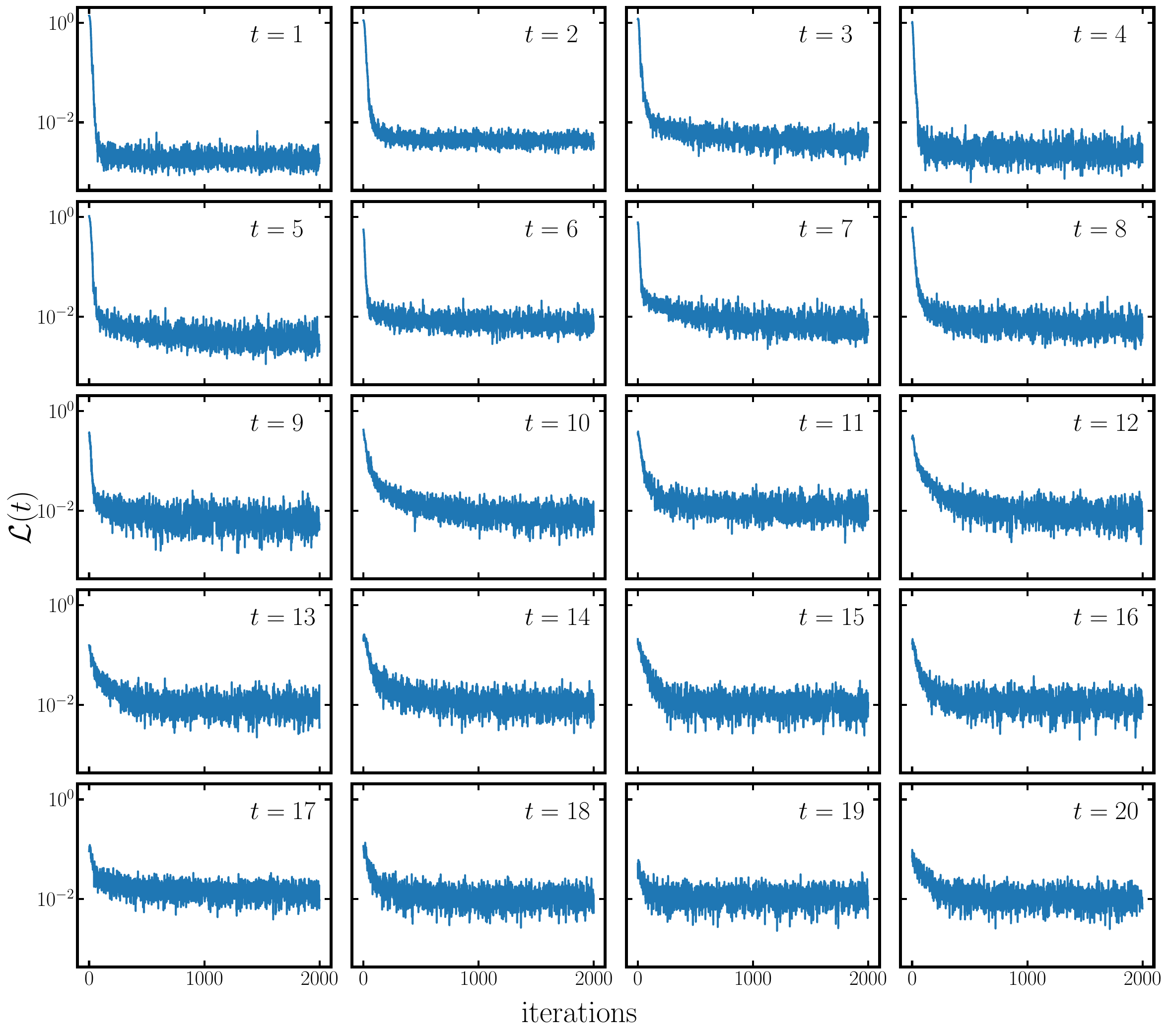}
    \caption{Training history of loss function in each denoising step for generating cluster state ensemble close to $\ket{0,0}$. }
    \label{fig:loss}
\end{figure}

\begin{figure}
    \centering
    \includegraphics[width=0.45\textwidth]{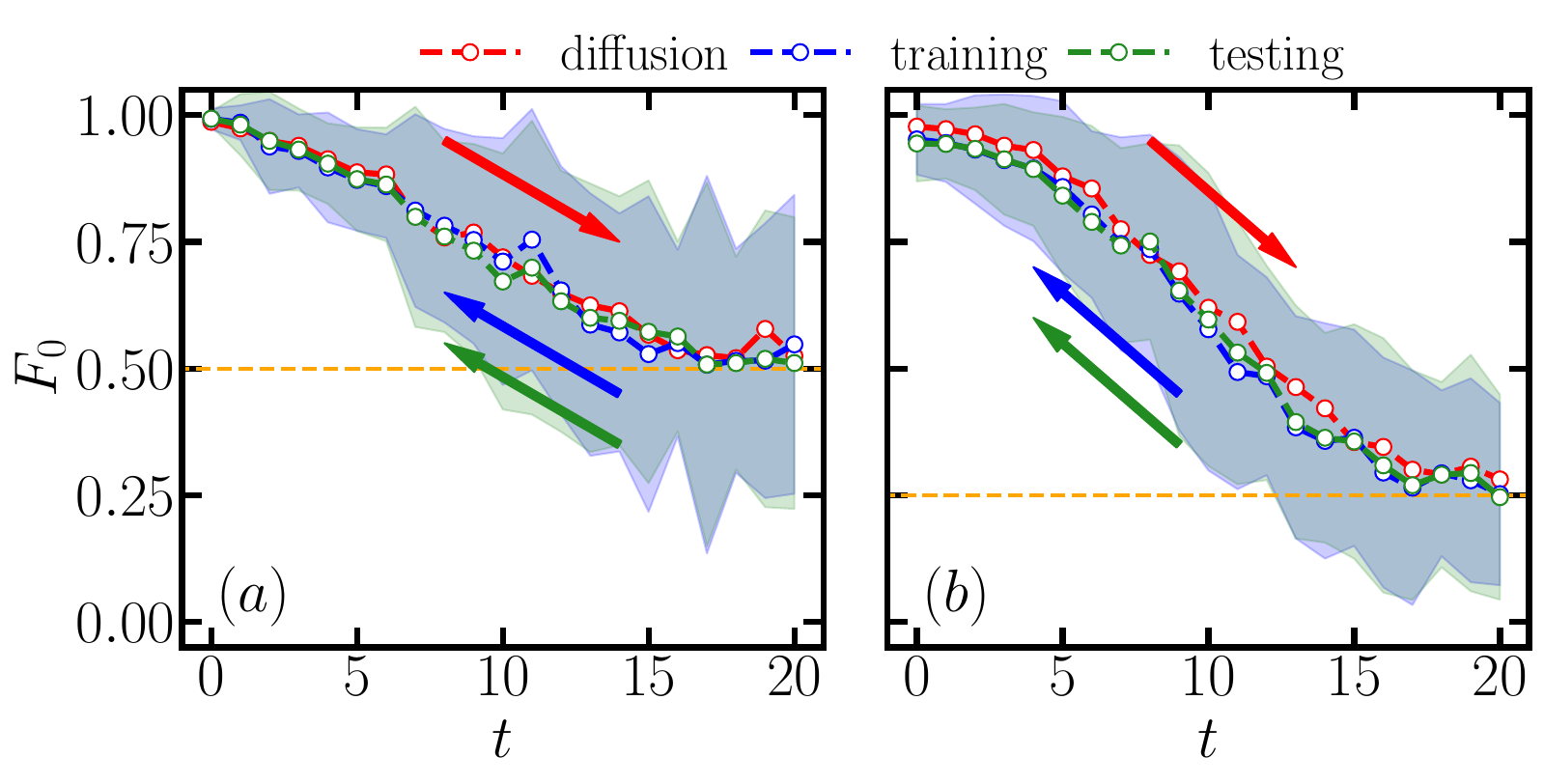}
    \caption{Dynamics of fidelity between generated states and $\ket{0}^{\otimes n}$ with (a) $n=1$ and (b) $n=2$ qubits. Average fidelity of diffusion (red), training (blue), testing (green) are plotted here with blue and green shaded area represent the standard deviation. }
    \label{fig:cluster_supp}
\end{figure}

To understand the performance more directly, besides the cost function, we can utilize the overlap fidelity $F_0$ with the state $\ket{0}$ due to the clustering of the target data. 
In Fig.~\ref{fig:cluster_supp}(a), the clustered data are same as in Fig.~\ref{fig:schematic}(b),(c) of the main text and Fig.~\ref{fig:MMD_cost_cluster} mentioned above, and the forward scrambling decreases the fidelity (red dots) from unity towards the Haar random value of $\overline{F_0}=1/2$ (yellow dashed) at $t=20$. The backward denoising (blue dots) on the other hand reverses the fidelity evolution from $t=20$ to $t=0$, which is also verified by the test data (green dots).

Now we generalize the above clustered data example to multiple qubits.
We begin with a distribution of two-qubit states clustered around $\ket{0,0}$. The data are generated in the form of $\ket{\psi^{(0)}} \sim \ket{0,0} + \epsilon \left(c_1 \ket{0,1}+c_2 \ket{1,0} + c_3\ket{1,1}\right)$ up to a normalization constant, where $\{c_i\}_{i=1}^3$ are all complex Gaussian distributed and the scale factor is chosen $\epsilon = 0.06$. We plot the fidelity with the state $\ket{0,0}$ in Fig.~\ref{fig:cluster_supp}(b) and the same evolution between ideal unity and Haar value of $1/4$ is confirmed. The MMD distance characterization is shown in Fig.~\ref{fig:MMD_cost_cluster_two_qubit} of the main text. In Fig.~\ref{fig:loss}, we show the training history of loss function in generating cluster state ensemble close to $\ket{0,0}$. The fluctuation in the training history of every denoising step is due to the random measurement in each training steps. In sample generation with optimized denoising PQC, the fluctuation is comparably small to the mean.

\section{Benchmarks: QuDT and QuGAN}
\label{app:benchmark}

\begin{figure}
    \centering
    \includegraphics[width=0.45\textwidth]{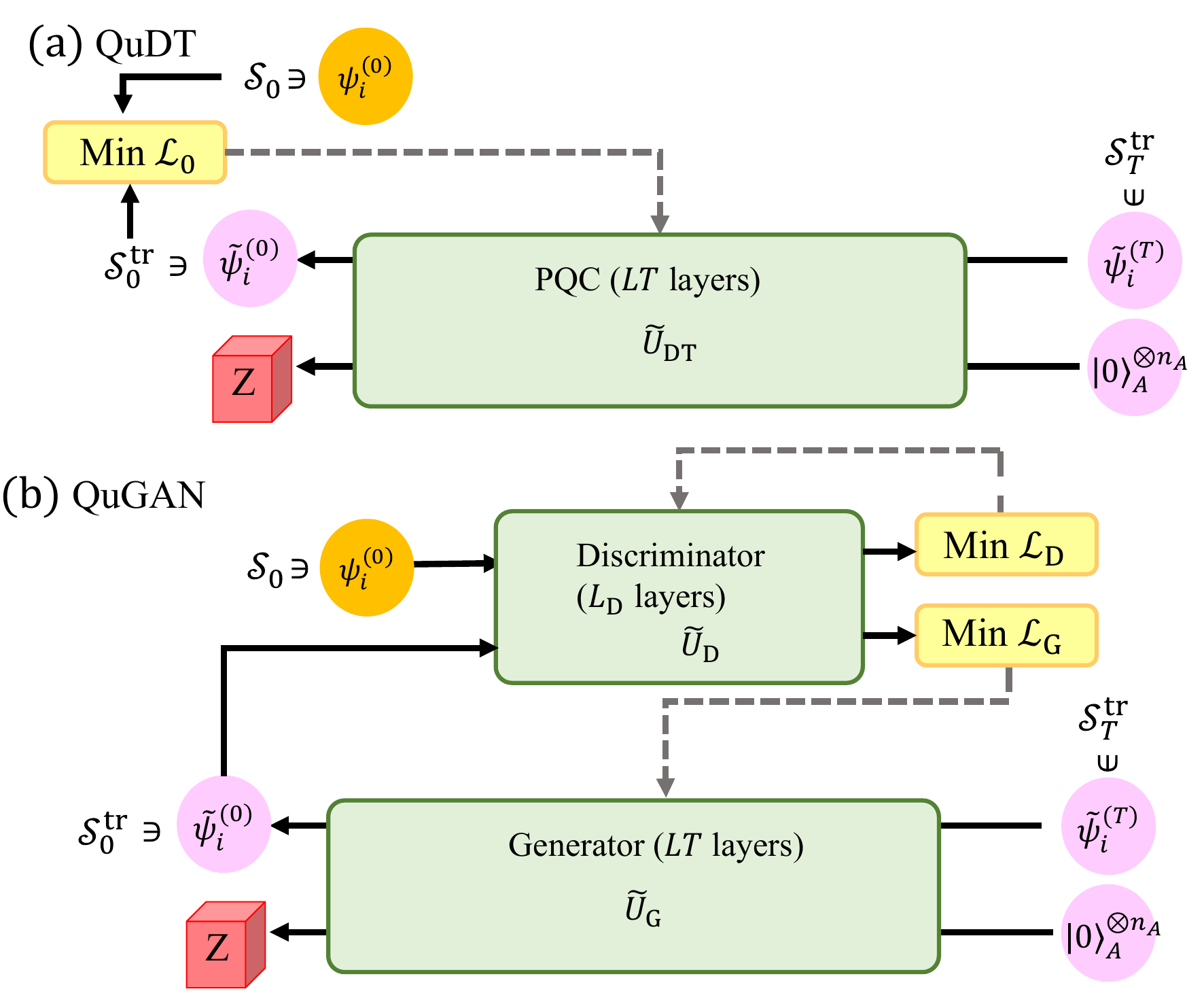}
    \caption{Schematic of QuDT (a) and QuGAN (b) in state ensemble generation. Since the forward of QuDT only has a single step from data to noise, it is not necessary to implement the diffusion for QuDT. In the backward process, there is also a single PQC $\tilde{U}_{\rm DT}$ with depth $L T$. In the training, the samples generated via applying $\tilde{U}_{\rm DT}$ on random states are directly compared to the target ensemble $\calS_0$, shown in (a). The QuGAN is similar to QuDT, except that a discriminator circuit is added to evaluate the cost function. \label{fig:qdt}
    }
\end{figure}

\begin{figure*}
    \centering
    \includegraphics[width=0.75\textwidth]{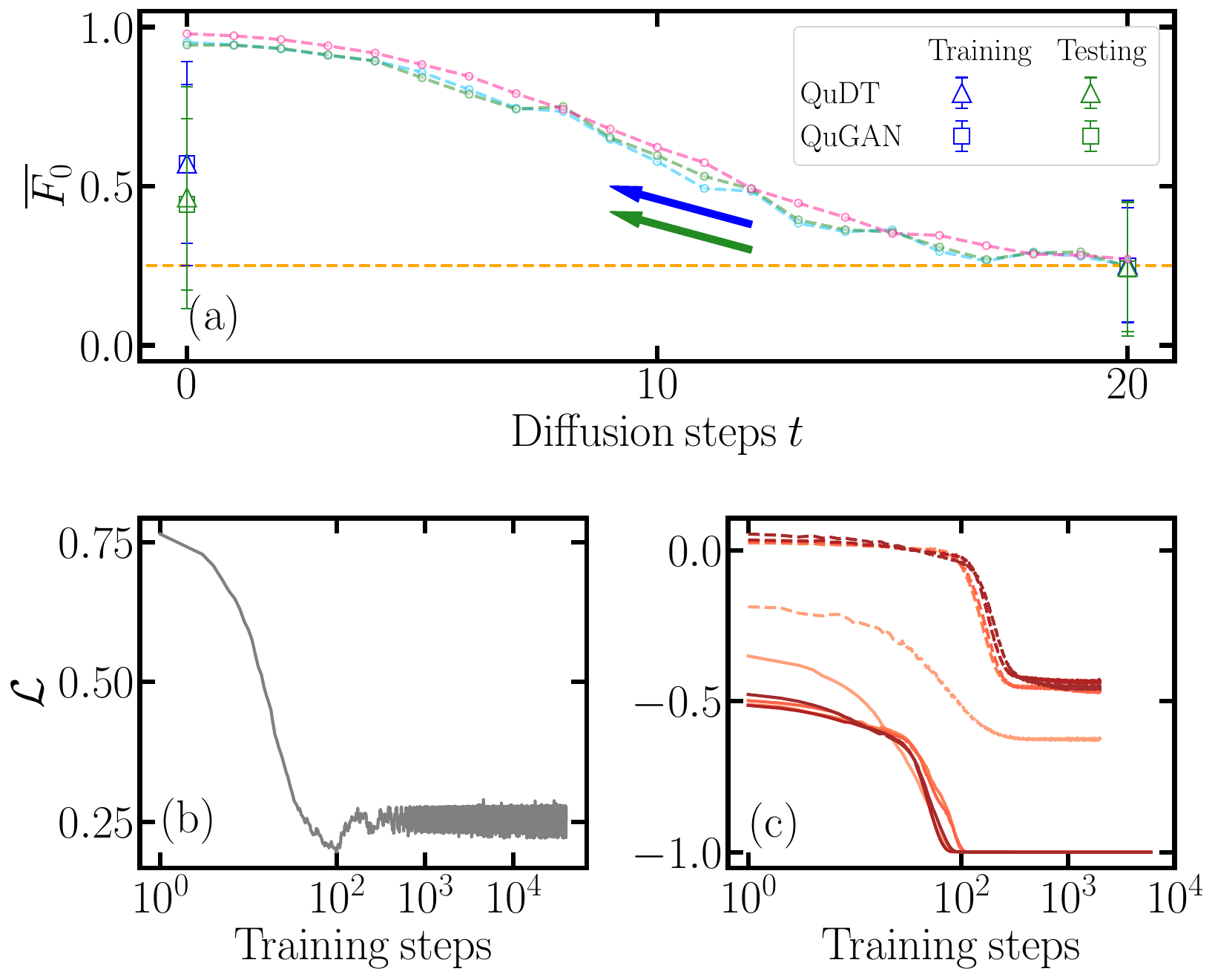}
    \caption{Benchmark between QuDDPM, QuDT and QuGAN on clustered state generation, where the true data set cluster around $\ket{0,0}$. (a) We plot mean fidelity of generated states with respect to $\ket{0,0}$ to compare their performance. Light red, blue, green curves are results from QuDDPM diffusion, training and testing for comparison. In (b) we show the QuDT training history of loss function. In (c) we show the QuGAN training history of discriminator (dashed lines) and generator (solid lines). Colors from light to dark represent the training of first to last adversarial cycle.  \label{fig:benchmark}
    }
\end{figure*}

In this section, we provide details on the benchmarks, including quantum direct transport (QuDT) and quantum generative adversarial network (QuGAN). First, we emphasize that none of the previous works actually solves the generation of an ensemble of quantum states, as we summarize in Table~\ref{tab:prior_works}. To perform the benchmark, we actually generalizes the previous works as we detail below.

\begin{table*}[]
    \centering
    \begin{tabular}{|c|c|c|}
    \hline
       Model & Task & Ref\\
    \hline
        QuGAN & Generation of a quantum state towards a given target state & ~\cite{hu2019quantum, huang2021quantum, niu2022entangling}\\ 
    \hline
        QuGAN & Learning and loading a classical distribution to quantum state & ~\cite{zoufal2019quantum, huang2021experimental, zhu2022generative}\\
    \hline
        QuCBM & Generation of a quantum state towards a target state & ~\cite{benedetti2019generative}\\
    \hline
        QuCBM & Generative learning of a classical distribution & ~\cite{gili2023quantum, liu2018differentiable}\\
    \hline
    \end{tabular}
    \caption{Summary of prior work results on quantum generative learning, including quantum generative adversarial network (QuGAN) and quantum circuit born machine (QuCBM).}
    \label{tab:prior_works}
\end{table*}

QuDT utilizes the same setup as QuDDPM, but attempts to generate the target state ensemble from a random state ensemble directly by only a single step of training on a quantum circuit with $L_{\rm DT}$ layers (see Fig.~\ref{fig:qdt}(a)), instead of discretized diffusion steps. Note that due to the single-step operation, it is not necessary to implement the forward noisy diffusion process. In the training, the loss function to be minized is
\begin{align}
    \calL_{\rm DT}(\tilde{U}_{\rm DT},\calE_0, \tilde{\calE}_T)=\calD\left(\calE_0, \tilde{\calE}_0[\tilde{U}_{\rm DT},\tilde{\calE}_T]\right),
\end{align}
which is the same as the one utilized in QuDDPM. The QuDT we proposed here can be regarded as a generalized version of quantum circuit Born machine (QuCBM)~\cite{gili2023quantum, liu2018differentiable}, which is initially targeted at generating a classical distribution from its overlap with each basis state (for example, computational basis).

Another quantum generative model to be compared with is QuGAN~\cite{lloyd2018quantum, zoufal2019quantum, hu2019quantum}. Since prior studies only focus on generating a single target state, to fit to our state ensemble generation tasks, we generalize the design of QuGAN, shown in Fig.~\ref{fig:qdt}(b). The QuGAN consists of two PQCs, a generator $\tilde{U}_{\rm G}$ and a discriminator $\tilde{U}_{\rm D}$. The generator takes the same input as the one in QuDT, while instead of minimizing a loss function directly, it tries to pass the evaluation of the discriminator that tells whether a given state comes from the real or fake state ensemble. Here we choose the discriminator having the same circuit architecture as the generator, and a single qubit Pauli-Z measurement is performed at the end to tell real/fake states apart. The training consists of several adversarial cycles, where each cycle includes the training for the discriminator and for the generator. First, we train the discriminator $\tilde{U}_{\rm D}$ while keeping the generator fixed, and the loss function to be minimized is
\begin{align}
    \calL_{\rm GAN-D} = P(\rm real|\rm fake) - P(\rm real|\rm real).
\end{align}
where $P(\rm real|\rm fake)$ is the average probability that the discriminator identifies inputs from generator as states from the true ensemble, and similar definition holds for $P(\rm real|\rm real)$. With the optimal discriminator, the probability $P(\rm real|\rm real)$ should approach unity.
Next, we train the generator $\tilde{U}_{\rm G}$ while keeping the discriminator fixed, and the loss function to be minimized is
\begin{align}
    \calL_{\rm GAN-G} = - P(\rm real|\rm fake).
\end{align}
With the optimal generator, the probability $P(\rm real|\rm fake)$ should approach unity.
Therefore, the joint optimum of generator and discriminator should lead to a near zero discriminator loss $\calL_{\rm GAN-D} \simeq 0$.

For a fair comparison among the three models, we choose the number of variational parameters in QDDPM, QuDT, and QuGAN's generator $\tilde{U}_{\rm G}$ to be the same, and the number of total training steps is also the same. In QuDDPM, we have $T=20$ diffusion steps where in each step, the layer of PQC is $L=6$. In QuDT, the layer of PQC is $L_{\rm DT} = 120$, and in QuGAN, the layer of generator is also $L_{\rm GAN-G} = 120$ and the layer of discriminator is $L_{\rm GAN-D} = 16$. We train the QuGAN with five adversarial cycles. 

\begin{figure}
    \centering
    \includegraphics[width=0.45\textwidth]{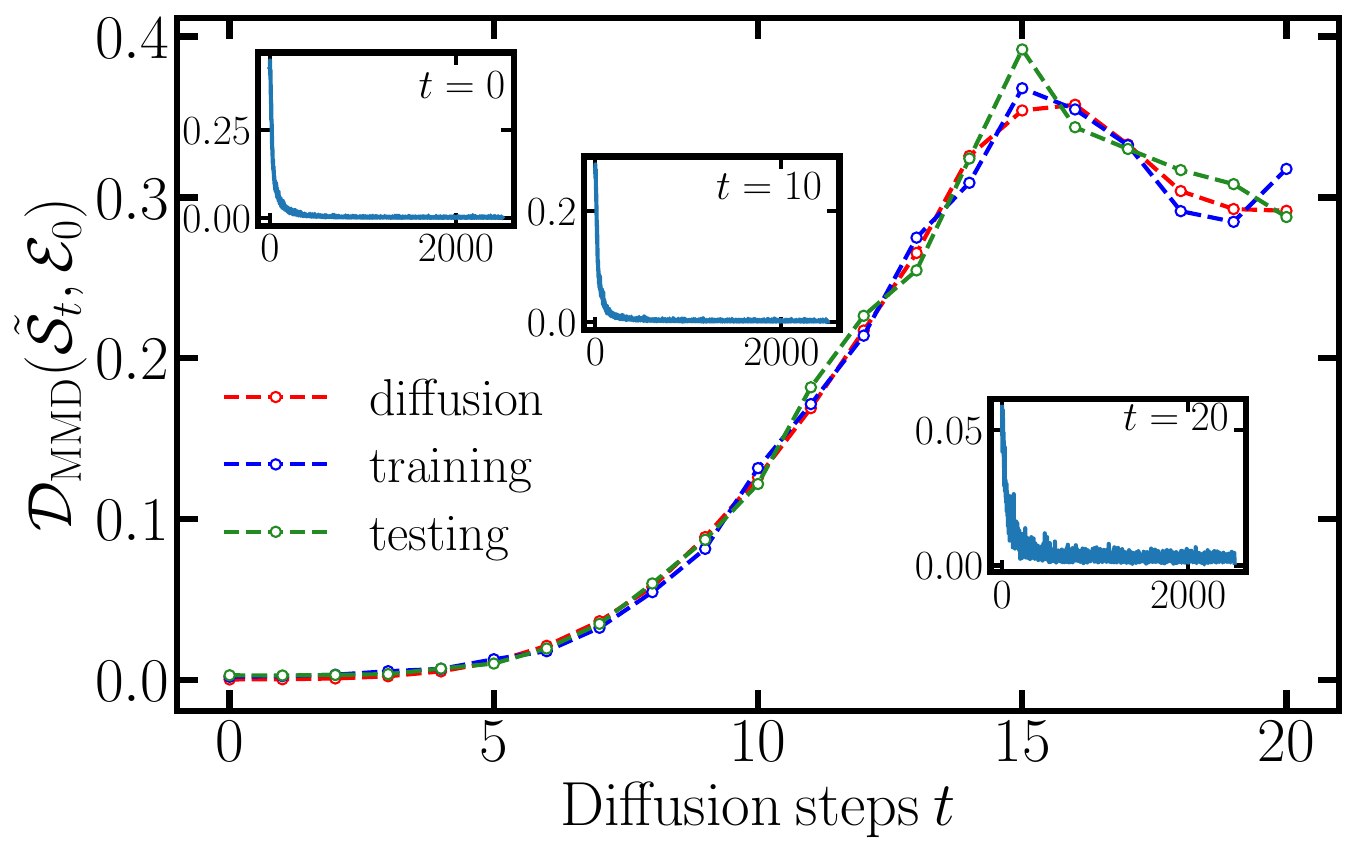}
    \caption{The decay of MMD distance $\calD$ between generated ensemble $\tilde{\calS}_t$ and target ensemble of states $\calE_0$ versus diffusion steps. The target is a state under probabilistic correlated noise (see Fig.~\ref{fig:qddpm_cluster}(a) in the main paper). Insets show the training loss history at different $t$. The converged distance at $t=0$ is nearly $2\times 10^{-3}$.}
    \label{fig:MMD_cost_cluster_error}
    \centering
    \includegraphics[width=0.45\textwidth]{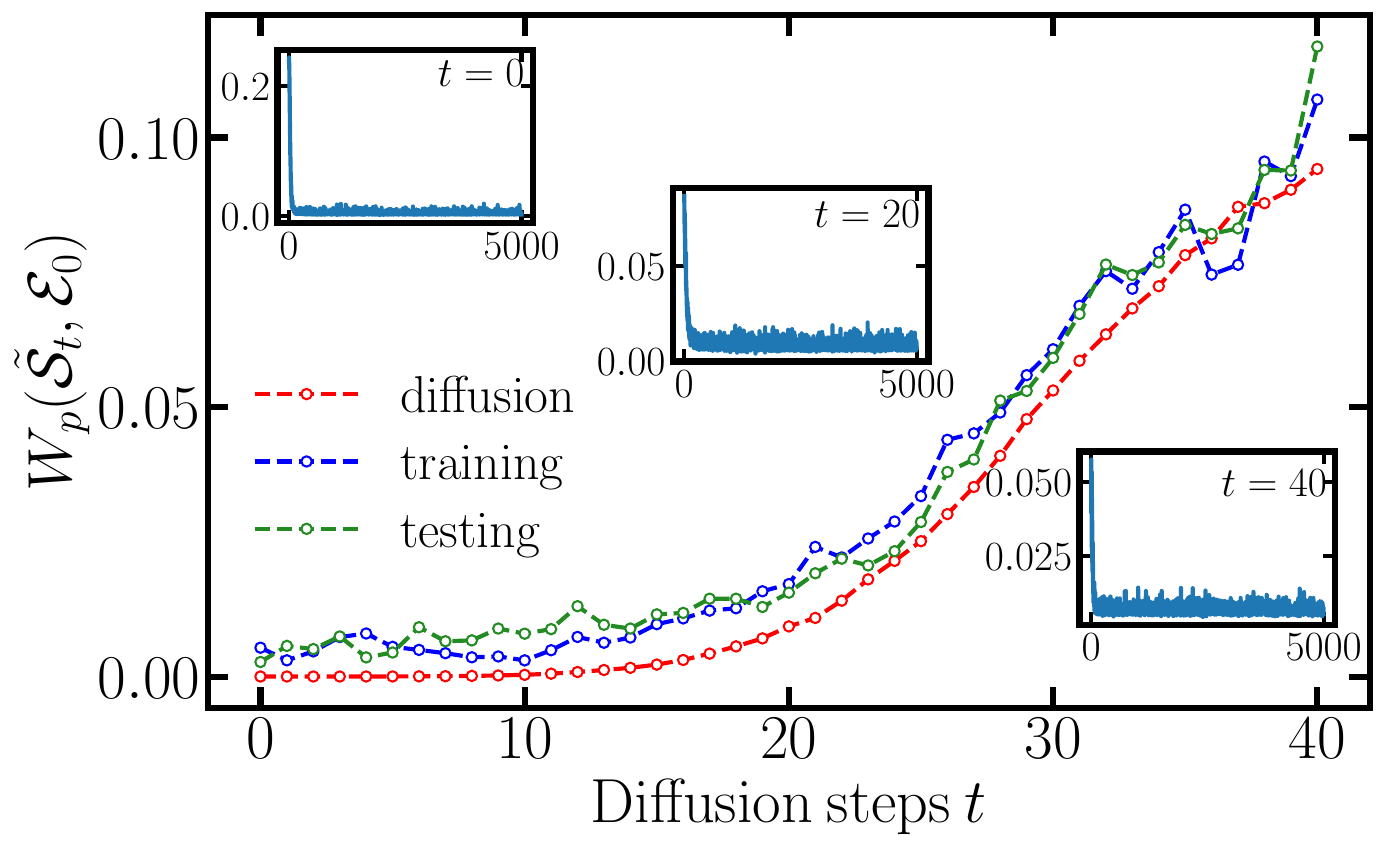}
    \caption{The decay of Wasserstein distance $W_p$ with $p=1$ between generated ensemble $\tilde{\calS}_t$ and target ensemble of states $\calE_0$ versus diffusion steps. The target is one-qubit states forming a unit circle in $X$-$Z$ plane of Bloch sphere (see Fig.~\ref{fig:qddpm_circle} of the main paper). Insets show the training loss history at different $t$. The converged distance at $t=0$ is nearly $2\times 10^{-3}$.}
    \label{fig:W_cost_circle}
\end{figure}

The MMD loss comparison is presented in Fig.~\ref{fig:MMD_cost_cluster_two_qubit} of the main text. Here we present more details on the training history and fidelity comparison.
In Fig.~\ref{fig:benchmark}, we compare the performance of QuDT and QuGAN to QuDDPM in generation of clustered state around $\ket{0, 0}$ in terms of mean fidelity which is defined as $\overline{F_0} = \mathbb{E}_{\ket{\psi} \sim \tilde{\calS}} |\braket{0,0|\psi}|^2$. As the target ensemble is clustered around $\ket{0, 0}$, a high fidelity indicates the success learning of the target ensemble. We find that compared to QuDDPM, both QuDT and QuGAN fail to generate the target ensemble of states clustered  around $\ket{0,0}$, as shown by their low fidelity: for QuDT, the fidelity converges to $\overline{F_{0,\rm tr}}=0.572\pm 0.321, \overline{F_{0,\rm te}}=0.465\pm 0.349$ for training and testing; and for QuGAN, the fidelity converges to $\overline{F_{0,\rm tr}}=0.570\pm 0.250, \overline{F_{0,\rm te}}=0.443\pm 0.249$ for training and testing. Their training history of the cost function is shown in Fig.~\ref{fig:benchmark}(b)(c), where convergence is found within the same amount of training time as QuDDPM. Thus, we believe that QuDDPM can provide advantage in the state ensemble generative learning due to its unique diffusion and denoising process.

For QuDT, the converged cost function based on MMD distance is large in Fig.~\ref{fig:benchmark}(b). This indicates QuDT is likely to be trapped in local minimum, which is typical in our various different training attempts of QuDT. The reason behind is that without the intermediate diffusion-denoising steps, the training task from full noise to a clustered distribution is challenging. For QuGAN, possible reason for failure is saddle points.
Due to the saddle point optimization nature between generator and discriminator, classical GAN training is known to have several common failure modes such as mode collapse, non-convergence and instability. For instance, if the generator finds a local strategy that can sufficiently fool the discriminator, then GAN is unable to explore beyond that local region, giving arise to mode collapse with limited diversity in the generated samples.



\section{Distance characterization in QuDDPM applications}
\label{app:distance_app}

In this section, we provide the distance characterization in the training cycles of QuDDPM's applications, similar to Fig.~\ref{fig:MMD_cost_cluster} for the simple clustered state case. In Fig.~\ref{fig:MMD_cost_cluster_error}, we show the decay of MMD distance in learning the probabilistic correlated noise and we see the distance trajectories of training and testing allign with that of the forward diffusion. In Fig.~\ref{fig:W_cost_circle}, we show the decay of Wasserstein distance $W_p$ with $p=1$ in learning the circle ensemble to characterize the difference between the generated ensemble through denoising and the target ensemble. Given the relatively small magnitude of $W_p$, we still see the agreement in denoising trajectories with respect to diffusion though with a relatively small gap.


%

\end{document}